\numberwithin{equation}{section}
\newtheorem{proposition}{Proposition}
\newtheorem{theorem}{Theorem}
\newcommand{\paren}[1]{\left(#1\right)}
\newcommand{\D}[2]{\frac{d#1}{d#2}}
\newcommand{\PD}[2]{\frac{\partial#1}{\partial#2}}
\newcommand{\mb}[1]{\mathbf{#1}}
\newcommand{\mc}[1]{\mathcal{#1}}
\newcommand{\abs}[1]{\left\lvert #1 \right\rvert}
\newcommand{\wh}[1]{\widehat{#1}}
\newcommand{\wt}[1]{\widetilde{#1}}
\title{A Multidomain Model for Ionic 
Electrodiffusion and Osmosis
with an Application to Cortical Spreading Depression}
\author{Yoichiro Mori\\
School of Mathematics\\
University of Minnesota}
\date{August 28, 2014}
\begin{document}
\maketitle

\begin{abstract}
Ionic electrodiffusion and osmotic water flow are central processes 
in many physiological systems. We formulate a system of 
partial differential equations that governs ion movement and water 
flow in biological tissue. A salient feature of this model is that 
it satisfies a free energy identity, ensuring the thermodynamic 
consistency of the model. A numerical scheme is developed for the 
model in one spatial dimension and is applied to a model of 
cortical spreading depression, a propagating breakdown of ionic 
and cell volume homeostasis in the brain.
\end{abstract}

\section{Introduction}

In this paper, we formulate a system of partial differential equations (PDE) 
that governs ionic electrodiffusion and osmotic water flow, 
to study tissue-level physiological phenomena.
To demonstrate the use of the model, we apply this to the study 
of cortical spreading depression, a pathological phenomenon of the 
brain that is linked to migraine aura and other diseases.

We now describe our modeling approach.
Biological tissue can often be seen as composed of multiple 
interpenetrating compartments. Cardiac tissue, for example, 
can be seen as composed of two interpenetrating compartments, 
the space that consists of interconnected cardiomyocytes 
and the extracellular space. 
The number of compartments may not be restricted to two. 
In the central nervous system, one may consider the neuronal, 
glial and extracellular compartments. 
In studying physiological phenomena at the tissue level,
it is often impractical to use models with exquisite 
cellular detail. If the spatial variations in the
biophysical variables of interest are slow compared to 
the cellular spatial scale, 
we may model the system instead as a homogenized continuum.
The first such model, the {\em bidomain model}, was introduced in 
\cite{eisenberg1970three,eisenberg1979electrical,tung1978bi},  
and its application to cardiac electrophysiology 
\cite{henriquez1992simulating} is probably 
the most important and successful example of this coarse-grained approach
in physiology. 
Let us use the cardiac bidomain model to further 
to illustrate this approach. The main variables of interest in 
cardiac electrophysiology are the intracellular and extracellular 
potentials, $\phi_i(\mb{x})$ and $\phi_e(\mb{x})$ where 
$\mb{x}$ is the spatial coordinate. From a microscopic standpoint, 
these values should only be defined within their respective compartments. 
At the coarse-grained level, however, we take the view that it is impossible 
to distinguish whether a given spatial point is inside the cell 
or outside the cell. The intracellular and extracellular potentials 
are now defined everywhere and cardiac tissue is thus seen 
as an biphasic continuum.  
In this paper, we shall call such models {\em multidomain models}
to emphasize the fact that the formalism is not restricted to 
just two interpenetrating phases.
We note that such coarse-grained models are also widely used in the 
material sciences to describe, for example, 
multiphase flow \cite{drew2012theory}.

Our goal is to formulate a multidomain model that describes 
ionic electrodiffusion and osmosis. This can be seen as a generalization 
of the cardiac bidomain model, which only treats electrical 
current flow. Ionic electrodiffusion and osmosis have been modeled 
to varying degrees of detail in different physiological systems.  
These include the kidney \cite{weinstein1994mathematical}, 
gastric mucosa \cite{lynch2011mathematical},
cerebral edema and hydrocephalus \cite{drapaca2012mechano}, 
cartilage \cite{gu1998mixture,gu1999transport}, 
and the lens \cite{malcolm2007computational} and cornea \cite{leung2011oxygen} of the eye. 
Here, we develop a time-dependent PDE model that fully incorporates
both ionic electrodiffusion and osmotic water flow in multiphasic tissue.
Ion balance is governed by the Nernst-Planck electrodiffusion equations
with source terms describing transmembrane ion flux. For water balance, 
we have the usual continuity equations with source terms describing 
transmembrane water flow. 
An important feature that distinguishes our model from previous models
is that it satisfies a free energy identity, which ensures that 
electrodiffusive and osmotic effects are treated
in a thermodynamically consistent fashion. 
The use of free energy identities as a guiding principle in formulating 
equations originates in the work of Onsager \cite{onsager1931reciprocal}, and this 
approach has been widely adopted in soft condensed matter physics 
\cite{doi1988theory,doi2011onsager,hyon2010energetic,eisenberg2010energy}.
The present work is closely related to our recent work in 
\cite{mori2011cvctrl,mori_liu_eis,mori2012gels,chen2014analysis},
wherein the free energy identity played an essential role in ionic 
electrodiffusion problems arising in physiology and the material sciences.
One practical benefit of the physically consistent formulation
of our model is that it treats fast cable (or electrotonic/electrical current) 
effects and the much slower effects mediated by ion concentration gradients
in a single unified framework.
This is significant especially in the context of ion homeostasis in the brain,
in which these fast and slow effects are both important and tightly coupled.

To demonstrate the use of the model (and to test our computational 
scheme), we have included a preliminary 
modeling study of cortical spreading depression (SD). 
SD is a pathological phenomenon of the central nervous system, 
first reported 70 years ago \cite{leao1944spreading}. 
Neurons sustain a complete 
depolarization and loss of functions for seconds to minutes. 
A massive redistribution of ions takes place \cite{grafstein1956mechanism} 
resulting in extracellular potassium concentrations in excess of 
$50$mmol/l. Also seen is neuronal swelling and narrowing of 
the extracellular space. 
This breakdown in ionic and volume homeostasis spreads across gray matter at 
speeds of $2-7$mm/min. SD is the physiological 
substrate of migraine aura, and it is also related to other 
brain pathologies such as stroke, seizures and trauma \cite{dreier2011role}. 
Studying SD is important, not only because of its close 
relationship with important diseases but also because a good 
understanding of SD will lead to a better understanding of 
brain ionic homeostasis, and hence of the workings of the 
central nervous system. 
Despite intensive research efforts, basic questions 
about SD remain unanswered \cite{miura2007cortical,herreras2005electrical}. 
We refer the reader to 
\cite{somjen2004ions,martins2000perspectives,
somjen2001mechanisms,charles2009cortical,
dahlem2014migraines}
for reviews on SD. 

There have been many modeling studies on SD propagation 
\cite{grafstein1963neuronal,reshodko1975computer,tuckwell1978mathematical,
tuckwell1981simplified,nicholson1993volume,
reggia1996computational,revett1998spreading,shapiro2001osmotic,
almeida2004modeling,bennett2008quantitative,
dahlem2010two,yao2011continuum,chang2013mathematical}, most of 
which are of reaction-diffusion type.
The large excursions in ionic concentration necessitates incorporation 
of ionic {\em electro}diffusion and osmotic effects, and 
our model is well-suited for this application.
As a natural output of our model, we can compute the negative 
shift in the extracellular potential (negative DC shift), 
an important experimental signal of SD. To the best of our 
knowledge, this is the first successful computation of this 
quantity. We then examine the effect of gap junctional coupling 
and extracellular chloride concentration on SD propagation speed. 
In particular, we argue that gap junctional coupling is unlikely 
to play an important role in SD propagation \cite{shapiro2001osmotic}.

The paper is organized as follows. In Section \ref{modelform}
we formulate the model. In Section \ref{sectFE}, we discuss the 
free energy identity. This identity allows us to place 
thermodynamic restrictions on the constitutive laws for 
the transmembrane fluxes. In Section \ref{sectsimple}, we make 
the equations dimensionless and discuss model reduction when 
certain dimensionless quantities are taken to $0$. In particular, 
we clarify the relationship between our multidomain electrodiffusion 
model with the cardiac bidomain model. In Section \ref{sectnum}, 
we discuss the numerical discretization of our system. 
We devise a implicit numerical method that preserves ionic concentrations and 
satisfies a discrete free energy inequality. In Section \ref{sectSD}, 
we perform simulations of SD. Appendix \ref{appSD} describes some 
of the details of the SD model and simulation and Appendix \ref{appExtVol}
includes some remarks on the computation of the extracellular voltage.

\section{Model Formulation}\label{modelform}

We suppose that the tissue of interest occupies a smooth bounded region 
$\Omega\in \mathbb{R}^3$. As discussed in the Introduction, 
we view biological tissue as being a multiphasic continuum. 
Suppose the tissue is composed of 
$N$ interpenetrating compartments which we label by $k$.
We assume that $k=N$ corresponds to the extracellular space
and that all other compartments communicate with the 
extracellular space only. When we only consider the intracellular and 
extracellular spaces, $N=2$ and the $2$nd compartment will be 
the extracellular space. In the central nervous system, we may 
consider neuronal, glial and extracellular spaces and the extracellular 
space corresponding to the $3$rd compartment, and the other two 
compartments communicating with the extracellular compartment. 
To each point in space, we assign a volume fraction $\alpha_k$
for each compartment. By definition, we have:
\begin{equation}
\sum_{k=1}^{N} \alpha_k(\mb{x},t)=1.\label{incomp}
\end{equation}
Note that $\alpha_k$ is a function of space and time.

In the following we shall introduce several parameters 
that may be influenced by the microscopic geometric details 
of the tissue. Mechanical properties of cells and 
hydraulic conductivity are examples of such parameters.
We shall make the assumption that these parameters depend on the underlying 
microscopic geometry only through its influence on $\alpha_k$.

In order to describe the time evolution of $\alpha_k$, 
we introduce the water flow velocity field $\mb{u}_k$
defined for each compartment. The volume fraction 
$\alpha_k$ satisfies the following equation:
\begin{align}
\PD{\alpha_k}{t}+\nabla \cdot (\alpha_k \mb{u}_k)&=- \gamma_k w_k, \; 
k=1,\cdots, N-1\label{alphak}\\
\PD{\alpha_{N}}{t}+\nabla \cdot(\alpha_N\mb{u}_N) 
&=\sum_{k=1}^{N-1}\gamma_k w_k \label{alphaN}
\end{align}
The coefficient 
$\gamma_k$ represents the area of cell membrane between compartment 
$k$ and the extracellular space per unit volume of tissue, and has units of 
$1/\text{length}$. We assume that the membrane does not stretch appreciably,
and take $\gamma_k$ to be constant in time. 
Transmembrane water flow per unit area of membrane is given 
by $w_k$ where flux going from compartment $k$ into the extracellular 
space is taken positive. Transmembrane water flow $w_k$ is 
a function of the volume fractions $\alpha_k$ as well as 
the ionic concentrations, the compartmental pressures and possibly 
the compartmental voltages, biophysical variables  
to be introduced below. This constitutive relation for $w_k$ will 
be discussed further in Section \ref{sectFE}. 
Equation \eqref{alphak} and \eqref{alphaN}, together with \eqref{incomp}
yields:
\begin{equation}\label{incompderiv}
\nabla \cdot \paren{\sum_{k=1}^N \alpha_k\mb{u}_k}=0.
\end{equation}
This condition states that the volume-fraction weighted velocity is divergence free, 
and corresponds to the incompressiblity condition for simple fluids.

We now turn to the dynamics of ionic concentrations. Let $c_i^k$
be the ionic concentration of the $i$-th species of ion in 
compartment $k$. 
We shall mainly be concerned with 
the inorganic ions (Na$^+$, K$^+$, Cl$^-$ etc) that
play an important role in electrophysiology and are
major contributors to osmotic pressure. Among the ions 
we do not track explicitly are the organic ions, including soluble proteins
and sugars and constituents of the intracellular and extracellular matrix. 
For simplicity, we neglect diffusion and transmembrane movement of 
these ions, which we call the immobile ions.
As we shall see, the background ions will exert electrostatic effects 
and contribute to osmotic pressure. 
We shall keep track of $M$ species of mobile ion. For each 
ionic species $i=1,\cdots,M$, we have the following conservation 
equations in each compartment.
\begin{align}
\PD{(\alpha_k c_i^k)}{t}&=-\nabla \cdot \mb{f}_i^k
-\gamma_kg_i^k, \label{cik}\; k=1,\cdots,N-1,\\
\PD{(\alpha_N c_i^N)}{t}&=-\nabla \cdot \mb{f}_i^N
+\sum_{k=1}^{N-1}\gamma_kg_i^k,\label{ciN}\\
\mb{f}_i^k&=-D_i^k\paren{\nabla c_i^k+\frac{z_iFc_i^k}{RT}\nabla \phi_k}
+\alpha_k\mb{u}_kc_i^k, \; k=1,\cdots,N.\label{fik}
\end{align}
In these equations,
$F$ is the Faraday constant, 
$D_i^k$ is the diffusion coefficient, 
$z_i$ is the valence of the $i$-th species of ion, 
$RT$ is the ideal gas constant times absolute temperature,
and $\phi^k$ is the electrostatic potential of the $k$-th compartment.
The diffusion coefficient $D_i^k$ is in general a diffusion tensor 
that may be a function of $\alpha_k$.
The terms $g_i^k$ in \eqref{cik} and \eqref{ciN} are 
the transmembrane fluxes per unit membrane area for each species of ion. 
Biophysically, these are fluxes that flow through ion channels, 
transporters, or 
pumps that are located on the cell membrane. It is useful to split 
this transmembrane flux into two terms:
\begin{equation}\label{gjh}
g_i^k=j_i^k+h_i^k.
\end{equation}
The flux $j_i^k$ is the passive flux corresponding to ion channel and 
transporter fluxes. The flux $h_i^k$ is the active flux through 
ionic pumps. Both $j_i^k$ and $h_i^k$ are functions of the ionic 
concentrations, compartmental voltage, and possibly the volume 
fractions and the compartmental pressure. The compartmental pressure $p_k$
will be introduced shortly. 
Ion channel currents are often also controlled by channel gating, 
and in such cases, $j_i^k$ will also depend on gating variables.
The constitutive relations for $j_i^k$
and $h_i^k$ will be discussed further in Section \ref{sectFE}, where we 
give a precise definition of what is meant by a passive flux. 

To specify the electrostatic potential $\phi^k$, we have the following 
equations which we call the {\em charge capacitor relation}:
\begin{align}
\gamma_kC_{\rm m}^k\phi_{kN}&=z_0^kFa_k+\sum_{i=1}^Mz_iF\alpha_kc_i^k, \; \phi_{kN}=\phi_k-\phi_N,\;
k=1,\cdots,N-1,\label{cmk}\\
-\sum_{k=1}^{N-1}\gamma_kC_{\rm m}^k\phi_{kN}&=z_0^NFa_N+\sum_{i=1}^Mz_iF\alpha_Nc_i^N\label{cmN}
\end{align}
These equations state
that excess charge is stored on the membrane capacitor. The constant 
$C_{\rm m}^k$ is the membrane capacitance per unit area of membrane
separating the $k$-th and $N$-th compartment. The immobile charge
density is given by $z_0^kFa_k$ where $z_0^k$ and $a_k$
are the valence and amount of immobile solutes respectively. 
We assume that the $a_k$ are constant in time. Given the smallness 
of the capacitance, 
it is often an excellent approximation 
to use the following electroneutrality condition in place of \eqref{cmk} and \eqref{cmN}:
\begin{equation}
z_0^kFa_k+\sum_{i=1}^Mz_iF\alpha_kc_i^k=0, \; k=1,\cdots,N.\label{EN}
\end{equation}
We shall come back to this approximation when we discuss non-dimensionalization
in Section \ref{sectsimple}. The charge capacitor relation can, thus, also be 
considered a condition for near electroneutrality. 
Under the electroneutrality approximation, $\phi_k$ is determined so that 
the electroneutrality condition is satisfied.
A differential equation for $\phi_k$ may be obtained 
by taking the time derivative of \eqref{EN} with 
respect to $t$ and using \eqref{cik} and \eqref{ciN}. 
We shall discuss this further later on.

We also point out that the charge capacitor relation of \eqref{cmk} 
and \eqref{cmN}
plays the role of the Poisson equation in the Poisson-Nernst-Planck system, 
in that \eqref{cmk} and \eqref{cmN} determine the electrostatic potential.
The use of this relationship in pump-leak model is standard 
\cite{hoppensteadt2002modeling,KS}. 
Its use in a spatially extended context appears in 
\cite{Qian-Sej1,koch1999biophysics}. We also point 
to \cite{mori2008ephaptic,mori2009numerical} 
in which similar relations are used. The use of the 
the charge capacitor relation in place of the Poisson equation 
is warranted in part because the space charge layer (Debye length, 
typically on the order of nanometers) 
is very small compared even to the cellular length scale. Indeed, 
much of the interest in applications of the Poisson-Nernst-Planck system 
in biology concerns modeling of ion channels 
and other biomolecules \cite{nonner_progress_1999,wei2012variational}, 
a problem at much smaller length scales than the problem at hand. 

Let us turn to the equations for $\mb{u}_k$.
We introduce the compartmental pressure fields 
$p_k$. 
\begin{equation}
\zeta_k\mb{u}_k=-\nabla \wt{p}_k-\sum_{i=1}^Mz_iFc_i^k\nabla \phi_k, \;
\wt{p}_k=p_k-RT\frac{a_k}{\alpha_k}, \; k=1\cdots N.\label{darcy}
\end{equation}
Here, $\zeta_k$ is the hydraulic resistivity for the $k$-th compartment
and $a_k$ is the amount of immobile ions in the $k$-th compartment.
The above states that the flow is driven by electrostatic forces and 
the modified pressure $\wt{p}_k$. The modified pressure $\wt{p}_k$ has 
a mechanical contribution $p_k$ as well as a contribution from 
the immobile ions $a_k/\alpha_k$. The $a_k/\alpha_k$ term is 
known as the {\em oncotic pressure} 
in the physiology literature \cite{boron2008medical}.
The hydraulic resistivity $\zeta_k$ is in general 
a position dependent tensor, but 
we may, for simplicity, assume that $\zeta_k$ is a scalar.
For the extracellular space, a simple prescription may be to 
set $\zeta_k$ proportional to $\alpha_k$. In the case of 
the intracellular space, hydraulic resistivity in many tissues 
should be controlled by gap junctions connecting adjacent cells. 
In the absence of gap junctions, $(\zeta_k)^{-1}$ may be set to $0$.  

To determine the compartmental pressures $p_k$, we consider
force balance between compartment $k$ and the extracellular 
space. This leads to the following expression:
\begin{equation}
p_k-p_N=\tau_k(\alpha), \; k=1,\cdots,N-1\label{pk}
\end{equation}
where $\tau_k$ is the mechanical tension per unit area of the membrane 
separating 
compartment $k$ and the extracellular space. 
The membrane tension $\tau_k$ 
should be determined by the instantaneous microscopic configuration of the 
membrane. Given our assumption that the effects of microscopic geometry manifests
itself only through its influence on $\alpha$,  
$\tau_k$ must be given as a function of the volume fractions 
$\alpha=(\alpha_1,\cdots,\alpha_N)$.
A simple constitutive relation may be:
\begin{equation}
\tau_k=S_k(\alpha_k-\alpha_k^0)\label{taukconst}
\end{equation}
where $\alpha_k^0$ is the volume fraction at which the membrane 
has no mechanical tension and $S_k$ is a stiffness constant. 
We consider a class of constitutive relations that can be derived from 
some energy function $\mathcal{E}(\alpha_1,\cdots,\alpha_{N-1})$ in the following sense:
\begin{equation}\label{taukelas}
\tau_k(\alpha)=\PD{\mathcal{E}}{\alpha_k}.
\end{equation}
The simple constitutive relation \eqref{taukconst} clearly satisfies
condition \eqref{taukelas} with the choice:
\begin{equation}
\mathcal{E}=\frac{1}{2}\sum_{k=1}^{N-1}S_k(\alpha_k-\alpha_k^0)^2.
\end{equation}

We have only specified the constitutive relation for the 
difference $p_k-p_N$. The extracellular pressure $p_N$ is determined 
so that the incompressibility condition \eqref{incompderiv} is satisfied.
We may derive an equation for $p_N$ by multiplying \eqref{darcy} by $\alpha_k(\zeta_k)^{-1}$, 
taking the divergence and taking the summation in $k=1,\cdots,N$. 
We obtain:
\begin{equation}\label{pNeqn}
0=\nabla \cdot\paren{\sum_{k=1}^N\paren{\alpha_k\zeta_k^{-1}
\paren{\nabla \paren{\tau_k(\alpha)+p_N-\frac{RTa_k}{\alpha_k}}+\sum_{i=1}^N z_iFc_i^k\nabla\phi_k}}},
\end{equation}
where we set $\tau_N=0$ for notational convenience
and used \eqref{incompderiv} to obtain $0$ on the left hand side of the above.

Boundary conditions will 
strongly depend on the problem in question. In this paper 
we shall assume no flux 
boundary conditions at the boundary $\partial \Omega$:
\begin{equation}
\mb{u}_k\cdot \mb{n}=0, \;\; \mb{f}_i^k\cdot \mb{n}=0
\label{bc}
\end{equation}
where $\mb{n}$ is the outward unit normal on $\partial \Omega$.

In the above, our region $\Omega$ was a bounded region in $\mathbb{R}^3$.
It is also meaningful to consider the above equations in a one-dimensional 
or two-dimensional region. This corresponds to a problem in which the biophysical 
variables of interest are assumed to have no spatial dependence in two or 
one coordinate direction respectively. 
Most of the calculations to follow remain valid when $\Omega$ is a 1D or 2D 
region instead of a 3D region.
In Section \ref{sectnum}, we present a numerical simulation for a 1D version of the model.

\section{A Free Energy Identity}\label{sectFE}

We shall now state and prove a free energy identity for the above system of equations.
Before we state the energy identity, we define some useful quantities.
\begin{align}
\mu_i^k&=RT(\ln c_i^k+1)+z_iFc_i^k\phi_k, \label{muik}\\
\pi_{{\rm w}k}&=RT\paren{\frac{a_k}{\alpha_k}+\sum_{i=1}^M c_i^k}.\label{piwk} 
\end{align}
The quantity $\mu_i^k$ is the chemical potential of the $i$-th species of ion 
in the $k$-th compartment.
The quantity $\pi_{{\rm w}k}$ is the osmotic pressure. It is also useful to define the 
following water potential:
\begin{equation}\label{psik}
\psi_k=p_k-\pi_{{\rm w}k}.
\end{equation}
\begin{theorem}\label{FEthm}
Suppose $\alpha_k,\mb{u}_k,c_i^k,\phi_k$ and $p_k$ are smooth functions that satisfy 
\eqref{incomp}, \eqref{alphak}, \eqref{alphaN}, \eqref{cik}, \eqref{ciN}, \eqref{cmk}, \eqref{cmN},
\eqref{darcy}, \eqref{pk}, \eqref{taukelas} and \eqref{bc}. 
Then, the following identity holds. 
\begin{equation}\label{FE}
\begin{split}
\D{G}{t}&=-I_{\rm bulk}-I_{mem}, \\
G&=\int_\Omega \paren{\mathcal{E}+\sum_{k=1}^N\paren{RT\paren{a_k\ln \paren{\frac{a_k}{\alpha_k}}+\sum_{i=1}^M \alpha_kc_i^k\ln c_i^k}}
+\sum_{k=1}^{N-1}\frac{1}{2}\gamma_kC_{\rm m}\phi_{kN}^2}d\mb{x},\\
I_{\rm bulk}&=\int_\Omega \paren{\sum_{k=1}^N\paren{\alpha_k\zeta_k\abs{\mb{u}_k}^2+\sum_{i=1}^M\frac{D_i^kc_i^k}{RT}\abs{\nabla \mu_i^k}^2}}d\mb{x},\\
I_{\rm mem}&=\int_\Omega \paren{\sum_{k=1}^{N-1}\gamma_k \paren{\psi_{kN}w_k+\sum_{i=1}^M\mu_i^{kN}g_i^k}}d\mb{x},
\end{split}
\end{equation}
where $\psi_{kN}=\psi_k-\psi_N$ and $\mu_i^{kN}=\mu_i^k-\mu_i^N$.
\end{theorem}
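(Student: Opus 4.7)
The plan is to differentiate $G$ in time, substitute the PDEs for $\alpha_k$, $c_i^k$ and $\phi_k$, and then integrate by parts using the no-flux boundary conditions \eqref{bc} so that every bulk gradient aligns either with the flux decomposition in \eqref{fik} or with Darcy's law \eqref{darcy}. First I would compute each contribution to $\D{G}{t}$ separately. The elastic part gives $\sum_{k=1}^{N-1}\tau_k\PD{\alpha_k}{t}$ via \eqref{taukelas}, and the oncotic logarithm contributes $-\sum_k RT(a_k/\alpha_k)\PD{\alpha_k}{t}$. For the ionic entropy the elementary identity
\begin{equation*}
\PD{(\alpha_k c_i^k\ln c_i^k)}{t}=(\ln c_i^k+1)\PD{(\alpha_k c_i^k)}{t}-c_i^k\PD{\alpha_k}{t}
\end{equation*}
splits the derivative into a $\PD{(\alpha_k c_i^k)}{t}$ piece and a $\PD{\alpha_k}{t}$ piece.

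The organizing step is absorbing the capacitor energy into the ion terms. Time-differentiating the charge capacitor relations \eqref{cmk}--\eqref{cmN} gives $\gamma_k C_{\rm m}^k\PD{\phi_{kN}}{t}=\sum_i z_iF\PD{(\alpha_k c_i^k)}{t}$ for $k<N$ together with the complementary identity for $k=N$. Writing $\phi_{kN}=\phi_k-\phi_N$ and using both relations to rewrite the $\phi_N$ contribution in terms of the $k=N$ ion flux, I expect to obtain
\begin{equation*}
\sum_{k=1}^{N-1}\gamma_k C_{\rm m}^k\phi_{kN}\PD{\phi_{kN}}{t}=\sum_{k=1}^{N}\sum_{i=1}^{M}z_iF\phi_k\PD{(\alpha_k c_i^k)}{t}.
\end{equation*}
Combined with the $RT(\ln c_i^k+1)\PD{(\alpha_k c_i^k)}{t}$ piece of the entropy derivative, this completes the coefficient of $\PD{(\alpha_k c_i^k)}{t}$ to the full electrochemical potential $\mu_i^k$ of \eqref{muik}. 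The remaining coefficient of $\PD{\alpha_k}{t}$ is $\tau_k-\pi_{{\rm w}k}$, with the convention $\tau_N:=0$; using $\sum_k\PD{\alpha_k}{t}=0$ from \eqref{incomp} and $\tau_k=p_k-p_N$ from \eqref{pk}, I may add the $k$-independent quantity $\pi_{{\rm w}N}-p_N$ inside the sum to rewrite this coefficient as $\psi_{kN}$, so the sum collapses to $k=1,\cdots,N-1$ and
\begin{equation*}
\D{G}{t}=\int_\Omega\paren{\sum_{k=1}^{N-1}\psi_{kN}\PD{\alpha_k}{t}+\sum_{k=1}^{N}\sum_{i=1}^{M}\mu_i^k\PD{(\alpha_k c_i^k)}{t}}d\mb{x}.
\end{equation*}

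Next I substitute the transport equations \eqref{alphak}--\eqref{alphaN} and \eqref{cik}--\eqref{ciN} and integrate the divergences by parts using \eqref{bc}. The membrane source terms collect into $\psi_{kN}w_k+\sum_i\mu_i^{kN}g_i^k$ after using $\mu_i^{kN}=\mu_i^k-\mu_i^N$ to merge the $k=N$ source with the $k<N$ sources, yielding exactly $-I_{\rm mem}$. The bulk pieces reduce to $\int\sum_k\sum_i\nabla\mu_i^k\cdot\mb{f}_i^k\,d\mb{x}+\int\sum_{k=1}^{N-1}\nabla\psi_{kN}\cdot\alpha_k\mb{u}_k\,d\mb{x}$; the water sum can be extended to all $k$ because incompressibility \eqref{incompderiv} makes the added $\int\nabla\psi_N\cdot\sum_k\alpha_k\mb{u}_k\,d\mb{x}$ vanish after one more integration by parts.

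Finally I would match the bulk integrand to $-I_{\rm bulk}$. Rewriting \eqref{fik} as $\mb{f}_i^k=-(D_i^k c_i^k/RT)\nabla\mu_i^k+\alpha_k c_i^k\mb{u}_k$ extracts the $-(D_i^k c_i^k/RT)\abs{\nabla\mu_i^k}^2$ dissipation and leaves an advective residue $\alpha_k\mb{u}_k\cdot\sum_i c_i^k\nabla\mu_i^k$. Summing this with the water piece and using $\pi_{{\rm w}k}=RT(a_k/\alpha_k+\sum_i c_i^k)$ together with $\wt{p}_k=p_k-RT a_k/\alpha_k$ shows that $\nabla\psi_k+\sum_i c_i^k\nabla\mu_i^k=\nabla\wt{p}_k+F\sum_i z_i c_i^k\nabla\phi_k=-\zeta_k\mb{u}_k$ by Darcy's law \eqref{darcy}, producing exactly the $-\alpha_k\zeta_k\abs{\mb{u}_k}^2$ term in $I_{\rm bulk}$. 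The main obstacle is this final algebraic matching together with the capacitor-entropy reorganization in the second step: each constitutive definition ($\mu_i^k$, $\pi_{{\rm w}k}$, $\wt{p}_k$, $\psi_k$) must be used exactly as given for the gradients to line up with Darcy's law, and it is precisely this alignment that encodes the thermodynamic consistency of the model.
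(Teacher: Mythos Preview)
Your argument is correct and uses exactly the same key ingredients as the paper's proof: completing $RT(\ln c_i^k+1)$ to $\mu_i^k$ via the time-differentiated charge capacitor relations \eqref{cmk}--\eqref{cmN}, and collapsing the advective residue onto the Darcy friction through the identity $\nabla\psi_k+\sum_i c_i^k\nabla\mu_i^k=\nabla\wt{p}_k+\sum_i z_iFc_i^k\nabla\phi_k=-\zeta_k\mb{u}_k$. The organization, however, is dual to the paper's. The paper starts from the PDE side, multiplying \eqref{cik} by $\mu_i^k$ and integrating, so that the pressure and elastic contributions appear only at the very end via \eqref{enderiv}; you instead differentiate $G$ first and arrive at the clean variational formula
\[
\D{G}{t}=\int_\Omega\paren{\sum_{k=1}^{N-1}\psi_{kN}\PD{\alpha_k}{t}+\sum_{k=1}^{N}\sum_{i=1}^{M}\mu_i^k\PD{(\alpha_k c_i^k)}{t}}d\mb{x}
\]
before substituting any transport equation. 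Your route makes the conjugate-pair structure $(\psi_{kN},\alpha_k)$, $(\mu_i^k,\alpha_kc_i^k)$ explicit up front, while the paper's compartment-by-compartment computation is more mechanical but avoids the separate incompressibility argument you need to extend the water sum to $k=N$. One harmless slip: the $k$-independent constant to add to $\tau_k-\pi_{{\rm w}k}$ is $\pi_{{\rm w}N}$, not $\pi_{{\rm w}N}-p_N$, to land exactly on $\psi_{kN}$; since any constant vanishes against $\sum_k\partial_t\alpha_k=0$ the conclusion is unaffected.
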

In \eqref{FE}, the function $G$ should be interpreted as the free energy of the system, 
given as the sum of the elastic energy, the free energy from the ions and the 
electrical energy stored on the membrane capacitor. The change in $G$ is written 
as a sum of two parts, $-I_{\rm bulk}$, arising from biophysical 
processes within each compartment, and, $-I_{\rm mem}$, across the cell membranes.
\begin{proof}
Multiply both sides of \eqref{cik} by $\mu_i^k$ and integrate over $\Omega$.
The left hand side yields:
\begin{equation}\label{mucik1}
\begin{split}
\int_\Omega \mu_i^k \PD{(\alpha_kc_i^k)}{t}d\mb{x}
=\int_\Omega \paren{RT\paren{\PD{}{t}(\alpha_kc_i^k\ln c_i^k)+c_i^k\PD{\alpha_k}{t}}
+z_kF\phi_k\PD{(\alpha_kc_i^k)}{t}}d\mb{x}
\end{split}
\end{equation}
The left hand side for \eqref{cik} yields:
\begin{equation}\label{mucik2}
\begin{split}
&-\int_\Omega \mu_i^k(\nabla \cdot \mb{f}_i^k+\gamma_k g_i^k)d\mb{x}=\int_\Omega (\mb{f}_i^k\cdot \nabla \mu_i^k-\gamma_k \mu_i^kg_i^k)d\mb{x}\\
=&\int_\Omega \paren{-\frac{D_i^kc_i^k}{RT}\abs{\nabla \mu_i^k}^2+RT\alpha_k\mb{u}_k\cdot\nabla c_i^k+z_iF\alpha_kc_i^k\mb{u}_k\cdot\nabla \phi_k-\gamma_k \mu_i^kg_i^k}d\mb{x}\\
=&\int_\Omega \paren{-\frac{D_i^kc_i^k}{RT}\abs{\nabla \mu_i^k}^2-RTc_i^k\nabla \cdot (\alpha_k\mb{u}_k)
+z_iF\alpha_kc_i^k\mb{u}_k\nabla \phi_k-\gamma_k \mu_i^kg_i^k}d\mb{x}.
\end{split}
\end{equation}
In the above, we integrated by parts and used \eqref{bc} in the first equality,
used \eqref{fik} and \eqref{muik} in the second equality and integrated by parts and used \eqref{bc} 
in the last equality. Combining \eqref{mucik1} and \eqref{mucik2} and using \eqref{alphak}, we find:
\begin{equation}\label{mucik3}
\begin{split}
&\int_\Omega \paren{RT\PD{}{t}\paren{\alpha_kc_i^k\ln c_i^k}
+z_iF\phi_k\PD{(\alpha_kc_i^k)}{t}}d\mb{x}\\
=&\int_\Omega \paren{-\frac{D_i^k}{RT}\abs{\nabla \mu_i^k}^2+RTc_i^k\gamma_kw_k
+z_iF\alpha_kc_i^k\mb{u}_k\cdot\nabla \phi_k-\gamma_k \mu_i^kg_i^k}d\mb{x}
\end{split}
\end{equation}
We now take the summation in $i=1,\cdots,M$ on both sides of the above. 
Note that:
\begin{equation}\label{mucik4}
\sum_{i=1}^M z_iF\phi_k\PD{(\alpha_kc_i^k)}{t}=\gamma_kC_{\rm m}\phi_k\PD{\phi_{kN}}{t}.
\end{equation}
where we used \eqref{cmk}. Furthermore, we have:
\begin{equation}\label{mucik5}
\begin{split}
&\int_\Omega \paren{\sum_{i=1}^M z_iF\alpha_kc_i^k\mb{u}_k\cdot\nabla \phi_k}d\mb{x}
=-\int_\Omega\paren{\alpha_k\zeta_k\abs{\mb{u}_k}^2+\alpha_k\mb{u}_k\cdot\nabla\wt{p}_k}d\mb{x} \\
=&\int_\Omega\paren{-\alpha_k\zeta_k\abs{\mb{u}_k}^2+\nabla \cdot(\alpha_k\mb{u}_k)\wt{p}_k}d\mb{x}\\
=&\int_\Omega\paren{-\alpha_k\zeta_k\abs{\mb{u}_k}^2-\paren{\PD{\alpha_k}{t}+\gamma_kw_k}\wt{p}_k}d\mb{x}\\
=&\int_\Omega\paren{-\alpha_k\zeta_k\abs{\mb{u}_k}^2-p_k\PD{\alpha_k}{t}
-\PD{}{t}\paren{RTa_k\ln \paren{\frac{a_k}{\alpha_k}}}-\gamma_kw_k\wt{p}_k}d\mb{x}.
\end{split}
\end{equation}
where we used \eqref{darcy} in the first equality, integrated by parts in the second equality, 
used \eqref{alphak} in the third equality and the definition of $\wt{p}_k$ in \eqref{darcy}
in the last equality.
We may now use \eqref{mucik4} and \eqref{mucik5} with \eqref{mucik3} to find that
\begin{equation}\label{mucik}
\begin{split}
&\int_\Omega \paren{RT\PD{}{t}\paren{a_k\ln \paren{\frac{a_k}{\alpha_k}}+\sum_{i=1}^M \alpha_kc_i^k\ln c_i^k}
+\gamma_kC_{\rm m}\phi_k\PD{\phi_{kN}}{t}}d\mb{x}\\
=&-\int_\Omega \paren{\alpha_k\zeta_k\abs{\mb{u}_k}^2+\sum_{i=1}^M\frac{D_i^kc_i^k}{RT}\abs{\nabla \mu_i^k}^2}d\mb{x}\\
&+\int_\Omega\paren{-p_k\PD{\alpha_k}{t}+\gamma_k\paren{\psi_kw_k+\sum_{i=1}^M\mu_i^k g_i^k}}d\mb{x}.
\end{split}
\end{equation}
where we used \eqref{piwk}, \eqref{psik} and the definition of $\wt{p}_k$ in \eqref{darcy}.
The above equation is valid for $k=1,\cdots, N-1$. For $k=N$, we may derive a relation similar to \eqref{mucik}
by multiplying \eqref{ciN} with $\mu_i^N$ and taking the sum in $i=1,\cdots M$. This yields:
\begin{equation}\label{muciN}
\begin{split}
&\int_\Omega \paren{RT\PD{}{t}\paren{a_N\ln \paren{\frac{a_N}{\alpha_N}}+\sum_{i=1}^M \alpha_Nc_i^N\ln c_i^N}
-\sum_{k=1}^{N-1}\gamma_kC_{\rm m}\phi_N\PD{\phi_{kN}}{t}}d\mb{x}\\
=&-\int_\Omega \paren{\alpha_N\zeta_N\abs{\mb{u}_N}^2+\sum_{i=1}^M\frac{D_i^Nc_i^N}{RT}\abs{\nabla \mu_i^N}^2}d\mb{x}\\
&+\int_\Omega\paren{-p_N\PD{\alpha_N}{t}-\sum_{k=1}^{N-1}\gamma_k\paren{\psi_Nw_k+\sum_{i=1}^M\mu_i^N g_i^k}}d\mb{x}.
\end{split}
\end{equation}
Take the summation of both sides of \eqref{mucik} in $k=1,\cdots,N-1$ and add this to both sides 
of \eqref{muciN}. This computation yields \eqref{FE} by noting that:
\begin{equation}\label{enderiv}
\sum_{k=1}^Np_k\PD{\alpha_k}{t}=\sum_{k=1}^{N-1}(p_k-p_N)\PD{\alpha_k}{t}=\sum_{k=1}^{N-1}\tau_k\PD{\alpha_k}{t}=\PD{\mathcal{E}}{t},
\end{equation}
where we used \eqref{incomp} in the first equality, \eqref{pk} in the second equality and \eqref{taukelas} in the third equality.
\end{proof}

In the above energy identity \eqref{FE}, $I_{\rm bulk}$ is non-negative, and therefore, 
leads to dissipation in free energy. If $I_{\rm mem}$ is also non-negative, 
then the free energy $G$ will be non-increasing. 
Substitute \eqref{gjh} into the expression for $I_{\rm mem}$ in \eqref{FE}.
\begin{equation}
\begin{split}
I_{\rm mem}&=I^{\rm passive}_{\rm mem}+I^{\rm active}_{\rm mem},\\
I^{\rm passive}_{\rm mem}&=
\sum_{k=1}^{N-1}\int_\Omega \gamma_k \paren{\psi_{kN}w_k+\sum_{i=1}^M\mu_i^{kN}j_i^k}d\mb{x},\\
I^{\rm active}_{\rm mem}&=
\sum_{k=1}^{N-1}\int_\Omega \gamma_k \paren{\sum_{i=1}^M\mu_i^{kN}h_i^k}d\mb{x}.
\end{split}
\end{equation}
Given the above expression, we require that 
the water flux $w_k$ and the passive (or dissipative) ionic flux $j_i^k$
satisfy the following inequality:
\begin{equation}\label{dissip}
\psi_{kN}w_k+\sum_{i=1}^n\mu_i^{kN}j_i^k\geq 0, \; k=1,\cdots,N-1.
\end{equation}
With inequality \eqref{dissip}, $I^{\rm passive}_{\rm mem}$ is always positive 
and leads to free energy dissipation whereas $I^{\rm active}_{\rm mem}$ may 
lead to either free energy increase or decrease.
We have assumed here that the water flux $w_k$ is wholly passive,
since there seems to be little experimental evidence of a molecular water pump.
There is no mathematical difficulty in introducing an active water flux however; 
all that needs to be done is to split the transmembrane water flux into 
an active and passive component as in \eqref{gjh}. 

From a biophysical standpoint, 
a slightly better definition of dissipativity may be given as follows.
Passive ionic flux is carried by different types of ion channels and transporters. Water flux is carried by water channels (aquaporins) or directly through the lipid bilayer membrane.
Suppose that there are $m=1,\cdots,N_{\rm c}$ types of channels or transporters
(we may also add a label to the lipid bilayer membrane itself, 
if water flux through it is non-negligible).
Then, the transmembrane water flux and ion channel flux may be written as
\begin{equation}\label{waterionmemflux}
w_k=\sum_{m=1}^{N_{\rm c}}w_{km}, \; j_i^k=\sum_{m=1}^{N_{\rm c}} j_{im}^k,
\end{equation}
where $w_{km}$ and $j^k_{im}$ are the transmembrane water flux and ion flux
for the $i$-th species of ion 
carried by channel/transporter type $m$ residing in cell membrane $k$.
For each $m$, we require that
\begin{equation}\label{dissipmol}
\psi_{kN}w_{km}+\sum_{i=1}^n\mu_i^{kN}j_{im}^k\geq 0, \; k=1,\cdots,N-1.
\end{equation}
If \eqref{dissipmol} is satisfied, \eqref{dissip} is clearly satisfied.
Suppose that a particular channel type $m$ is permeable only to a single 
species of ion $i=i'$ and is not permeable to water. Then, 
$j_{im}^k=0$ for $i\neq i'$ and $w_{km}=0$, and therefore, 
there is 
only one term in the left hand side of \eqref{dissipmol}:
\begin{equation}\label{dissipsingle}
\mu_i^{kN}j_{i'm}^k\geq 0.
\end{equation}
This implies that $j_{i'm}^k$ must have the same sign as $\mu_{i'}^{kN}$. 
In physico-chemical terms, this states that the ionic flux flows
from where the chemical potential is high to low. It is in this 
sense that $j_{i'm}^k$ is a passive flux. 

Typical constitutive relations
for ion channel flux has the form:
\begin{equation}\label{jimk}
j_{im}^k(\mb{x},\mb{s}_m^k,\mb{c}^k,\mb{c}^N,\phi^{kN})=g_{im}^k(\mb{x},\mb{s}_m^k)J_{im}(\mb{c}^k,\mb{c}^N,\phi_{kN}),
\end{equation}
where $\mb{c}^k=(c_1^k,\cdots,c_M^k), \mb{c}^N=(c_1^N,\cdots,c_M^N)$
and $\mb{s}_m^k=(s_{m1}^k,\cdots,s_{mG}^k)$ are the gating variables which specify 
the proportion of ion channels that are open.
The function $g_k(\mb{x},\mb{s}_m^k)$ denotes the density of open 
channels in cell membrane $k$ at location $\mb{x}$. 
The function 
$J_{im}$, when converted to units of electrical current rather than flux,
is known as the instantaneous current-voltage relationship.
The simplest choice may be the linear current voltage relation
\begin{equation}
J_{im}^\text{lin}=G_{im}\mu_i^{kN}=
G_{im}\paren{RT\ln\paren{\frac{c_i^k}{c_i^N}}+z_iF\phi_{kN}},\label{HH}
\end{equation}
where $G_{im}>0$ and $G_{im}(z_iF)^2$ is the conductance.  
The following Goldmain-Hodgkin-Katz relation is also used very often.
\begin{equation}
\begin{split}
J_{im}^\text{GHK}&=P_{im}J_{\rm GHK}(z_i,c_i^k,c_i^N,\phi_{kN}),\\
J_{\rm GHK}&=z_i\phi'\paren{
\frac{c_i^k\exp(z_i\phi')-c_i^N}{\exp(z_i\phi')-1}}, 
\; \phi'=\frac{F\phi_{kN}}{RT},\label{GHK}
\end{split}
\end{equation}
where $P_{im}>0$ is known as the permeability.
Many ion channels are selectively permeable to one species of ion $i=i'$.
Such a channel type $m$ may be modeled so that
$\eta_{im}$ (or $\tilde{\eta}_{im}$) is non-zero only for 
$i=i'$ and $w_{km}=0$. It is easily seen that both \eqref{HH}
and \eqref{GHK} satisfy condition \eqref{dissipsingle}.

The gating variables
$\mb{s}_m^k=(s^k_{m1},\cdots,s^k_{mG})$ that appear in \eqref{jimk} 
satisfy an ODE of the form:
\begin{equation}\label{gateeqn}
\PD{s_{mg}^k}{t}=Q_{mg}(s_{mg}^k,\mb{c}^k,\mb{c}^N,\phi_{kN}).
\end{equation}
Typically, $Q_{mg}$ is a linear function of $s_{mg}^k$ and depends only 
on $\phi_{kN}$.
Examples of \eqref{HH}, \eqref{GHK} and  
are used in the computational examples 
discussed in Section \ref{sectSD}.


Some transporters couple the flow of two or more different ionic species
in the sense that the chemical potential difference of ion $i$ may influence 
the flow of ion $i', i\neq i'$. Flux through such a passive transporter
will not in general satisfy \eqref{dissipsingle} but must still satisfy 
the more general relation \eqref{dissipmol}. 
Examples of such transporter models can be found, for example, 
in \cite{weinstein1994ammonia}.

There are no thermodynamic restrictions on the constitutive relation for the active 
flux $h_i^k$. The flux $h_i^k$ may consist of fluxes carried by different ionic 
pumps, and thus, may have the form:
\begin{equation}\label{hik}
h_i^k=\sum_{m=1}^{N_{\rm p}} h_{im}^k(\mb{x},\mb{c}^k,\mb{c}^N,\phi_{kN}).
\end{equation} 

Let us now turn to the constitutive relation for the passive water flux 
$w_{km}$. If the water flow is not influenced by the chemical potential 
difference of other ions, \eqref{dissipmol} implies that $w_{km}$
must satisfy:
\begin{equation}
\psi_{kN}w_{km}\geq 0.
\end{equation}
This means that water flows from where the water potential $\psi$
is high to low. The water potential, defined in \eqref{psik}, is 
given as the difference between the mechanical and osmotic pressures. 
We thus arrive at the familiar statement that water flow is 
driven by a competition of mechanical and osmotic pressures.
A simple prescription for $w_{km}$ is:
\begin{equation}\label{linwaterflow}
w_{km}(\mb{x},\mb{c}^k,\mb{c}^N,\alpha_k,\alpha_N)=\eta^{\rm w}_{km}(\mb{x})\psi_{kN},
\end{equation}
where $\eta^{\rm w}_{km}$ is the hydraulic permeability.
If water flow is influenced by the chemical potential difference of 
ions, the more general \eqref{dissipmol} is satisfied.
If the chemical potential of ions influence water flow, 
Onsager reciprocity implies 
that water potential must have an influence 
ion flux \cite{katzir1965nonequilibrium}.
The effect of water flow on ion flux is known as solvent drag 
\cite{boron2008medical}.  

\section{Simplifications}\label{sectsimple}

The model we just described incorporates 
effects of electrodiffusion, osmosis, volume changes 
and water flow in a three dimensional setting. 
However, we do not expect all of these effects to 
be important in all physiological systems of interest.
It is thus of interest to see how the model simplifies when 
a subset of these effects are deemed negligible.  

We first make our system dimensionless. 
We introduce the following rescaling. 
\begin{equation}
x=L\wh{x}, \; t=\tau_D\wh{t}=\frac{L^2}{D_0}\wh{t}, \; c_i^k=c_0\wh{c}_i^k, \;
\phi=\frac{RT}{F}\wh{\phi},\;
\mb{u}_k=\frac{c_0RT}{\zeta_0}\wh{\mb{u}}_k,
\end{equation}
where $\wh{\cdot}$ denotes the dimensionless variables. 
In the above, $L$ is the characteristic domain size, 
$D_0, c_0$ are the typical magnitude of the diffusion coefficient
and concentrations respectively  
and  $\zeta_0$ is the representative magnitude 
of the hydraulic resistivity (the coefficients $\zeta_k$ in \eqref{darcy}).
With the above dimensionless variables, we may rewrite equations 
\eqref{alphak}, \eqref{alphaN}, \eqref{cik}, \eqref{ciN} as follows.
\begin{align}\label{akdless}
\PD{\alpha_k}{\wh{t}}+{\rm Pe}\wh{\nabla}\cdot(\alpha_k\wh{\mb{u}}_k)
&=-\wh{w}_k\\\label{aNdless}
\PD{\alpha_N}{\wh{t}}+{\rm Pe}\wh{\nabla}\cdot(\alpha_N\wh{\mb{u}}_N)
&=\sum_{k=1}^{N-1}\wh{w}_k\\
\label{cikdless}
\PD{(\alpha_k\wh{c}_i^k)}{\wh{t}}
+{\rm Pe}\wh{\nabla}\cdot(\alpha_k\wh{\mb{u}}_k\wh{c}_i^k)
&=\wh{\nabla}\cdot\paren{\wh{D}_i^k\paren{\wh{\nabla}\wh{c}_i^k+z_i\wh{c}_i^k\wh{\nabla}\wh{\phi}_k}}
-\wh{g}_i^k\\
\label{ciNdless}
\PD{(\alpha_N\wh{c}_i^N)}{\wh{t}}
+{\rm Pe}\wh{\nabla}\cdot(\alpha_N\wh{\mb{u}}_k\wh{c}_i^N)
&=\wh{\nabla}\cdot\paren{\wh{D}_i^N\paren{\wh{\nabla}\wh{c}_i^N+z_i\wh{c}_i^N\wh{\nabla}\wh{\phi}_N}}
+\sum_{k=1}^{N-1}\wh{g}_i^k
\end{align}
where 
\begin{equation}
D_k=D_0\wh{D}_k, \; 
\gamma_kw_k=\frac{1}{\tau_D}\wh{w}_k, \; 
\gamma_k\wh{g}_i^k=\frac{c_0}{\tau_D}\wh{g}_i^k, 
\; {\rm Pe}=\frac{c_0RT/\zeta_0}{L/\tau_D}.
\end{equation}
The dimensionless number ${\rm Pe}$ is the P\'eclet number in which 
the representative fluid velocity is taken to be $c_0RT/\zeta_0$.
To make \eqref{cmk}, \eqref{cmN} dimensionless,
we introduce the following dimensionless variables.
\begin{equation}
a_k=c_0\wh{a}_k, \; \gamma_kC_{\rm m}^k=\gamma_0C_{\rm m}^0\wh{C}_{\rm m}^k,
\end{equation}
where $\gamma_0$ and $C_{\rm m}^0$ are the representative magnitudes 
of the inverse intermembrane distance $\gamma_k$ and the capacitance 
$C_{\rm m}^k$. With this, \eqref{cmk} and \eqref{cmN} may be rewritten as:
\begin{align}\label{cmkdless}
\epsilon\wh{C}_{\rm m}^k\wh{\phi}_{kN}
&=z_0^k\wh{a}_k+\sum_{i=1}^M z_i\alpha_k\wh{c}_i^k,\; 
\wh{\phi}_{kN}=\wh{\phi}_k-\wh{\phi}_N,\\
\label{cmNdless}
-\epsilon\sum_{k=1}^{N-1}\wh{C}_{\rm m}^k\wh{\phi}_{kN}
&=z_0^N\wh{a}_N+\sum_{i=1}^M z_i\alpha_N\wh{c}_i^N,
\end{align}
where 
\begin{equation}
\epsilon=\frac{\gamma_0C_{\rm m}^0RT/F}{c_0F}.
\end{equation}
The dimensionless constant $\epsilon$ is the ratio between charge 
stored on the membrane and the bulk ionic charges. This constant 
is typically very small (on the order of $10^{-4}\sim 10^{-5}$).
To make \eqref{darcy} and \eqref{pk} dimensionless, we rescale pressure and the 
elastic force as follows.
\begin{equation}
p_k=c_0RT\wh{p}_k, \; a_k=c_0\wh{a}_k, \tau_k=\tau_0\wh{\tau}_k
\end{equation}
where $\tau_0$ is the typical magnitude of the elastic force $\tau_k$.
We may rewrite \eqref{darcy} and \eqref{pk} as:
\begin{align}\label{darcydless}
\wh{\zeta}_k\wh{\mb{u}}_k
=-\wh{\nabla}\paren{\wh{p}_k-\frac{\wh{a}_k}{\alpha_k}}
-\sum_{i=1}^N z_i\wh{c}_i^k\wh{\nabla}\wh{\phi}_k,\; 
\wh{p}_k-\wh{p}_N=A\wh{\tau}_k,
\end{align}
where 
\begin{equation}\label{Adef}
\zeta_k=\zeta_0\wh{\zeta}_k, \; A=\frac{\tau_0}{c_0RT}.
\end{equation}
The dimensionless constant $A$ is the ratio between the elastic force
and the osmotic pressure.
Finally, we may make \eqref{gateeqn} dimensionless as follows:
\begin{equation}\label{gateeqndless}
\delta\PD{s_{mg}^k}{\wh{t}}=\wh{Q}_{mg}, \; Q_{mg}=\frac{1}{\tau_g^0}\wh{Q}_{mg}, \; 
\delta=\frac{\tau_g^0}{\tau_D}.
\end{equation}
where $\tau_g^0$ is the characteristic response time of the gating variables
and $\delta$ is the ratio between the time scale of diffusion and that of 
the gating variables. This ratio is typically quite small.

\subsection{Slow Flow Limit}

Let us now discuss some limiting cases. First, consider the 
P\'eclet number ${\rm Pe}$. In the limit ${\rm Pe}\to 0$, 
all the advective terms in 
\eqref{akdless}, \eqref{aNdless}, \eqref{cikdless}, 
and \eqref{ciNdless} vanish. Furthermore, equation 
\eqref{darcydless} determining $\wh{u}_k$ is decoupled from 
the rest of the system. 
We may thus treat 
\eqref{akdless}-\eqref{ciNdless}, \eqref{cmkdless} and \eqref{cmNdless}
as equations for $\alpha_k, \wh{c}_i^k,\wh{\phi}_k$.
This is the model for which we shall develop a numerical 
scheme in Section \ref{sectnum}.
An important feature of the ${\rm Pe}\to 0$ limit is that the model 
still satisfies the energy identity \eqref{FE} with a few terms dropped.
We state this result below.
\begin{proposition}
Set ${\rm Pe}=0$ in \eqref{akdless}, \eqref{aNdless}, \eqref{cikdless}
and \eqref{ciNdless}. The variables $\alpha_k, \wh{c}_i^k,\wh{\phi}_k$ 
satisfy the dimensionless version of \eqref{FE} without 
the hydraulic dissipation term $\alpha_k\zeta_k\abs{\wh{u}_k}^2$. 
\end{proposition}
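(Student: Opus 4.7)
The plan is to replay the proof of Theorem~\ref{FEthm} with every advective contribution dropped, since those are precisely the terms carrying a prefactor of ${\rm Pe}$ in \eqref{akdless}--\eqref{ciNdless}. Concretely, ${\rm Pe}\,\wh{\nabla}\cdot(\alpha_k\wh{\mb{u}}_k)$ and ${\rm Pe}\,\wh{\nabla}\cdot(\alpha_k\wh{\mb{u}}_k\wh{c}_i^k)$ vanish, \eqref{darcydless} decouples, and the field $\wh{\mb{u}}_k$ never reappears in the energy calculation. This already explains at the structural level why no term of the form $\alpha_k\wh{\zeta}_k\abs{\wh{\mb{u}}_k}^2$ can be generated at the end.

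Otherwise the computation tracks the proof of Theorem~\ref{FEthm} step by step. Test the ${\rm Pe}=0$ form of \eqref{cikdless} against $\wh{\mu}_i^k$, integrate, and invoke the no-flux boundary conditions. The Nernst--Planck flux stripped of its advective piece gives the diffusive dissipation $\wh{D}_i^k\wh{c}_i^k\abs{\wh{\nabla}\wh{\mu}_i^k}^2/RT$, while the transmembrane source contributes $-\wh{\mu}_i^k\wh{g}_i^k$. Expanding $\wh{\mu}_i^k$ on the left and using $\partial_t\alpha_k=-\wh{w}_k$ (from the ${\rm Pe}=0$ form of \eqref{akdless}) converts the stray $RT\wh{c}_i^k\partial_t\alpha_k$ term into $-RT\wh{c}_i^k\wh{w}_k$. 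Summing over $i$, combining with $\partial_t(RT\wh{a}_k\ln(\wh{a}_k/\alpha_k))=-(RT\wh{a}_k/\alpha_k)\partial_t\alpha_k$, and collapsing $\sum_i z_i F\wh{\phi}_k\partial_t(\alpha_k\wh{c}_i^k)$ into $\wh{C}_{\rm m}^k\wh{\phi}_k\partial_t\wh{\phi}_{kN}$ through the time derivative of \eqref{cmkdless}, assembles for each $k=1,\dots,N-1$ the analogue of \eqref{mucik} with \emph{both} the $\alpha_k\zeta_k\abs{\mb{u}_k}^2$ contribution and the $-p_k\partial_t\alpha_k$ contribution absent. Each of those originated in \eqref{mucik5} from applying Darcy's law to $\sum_i z_i F\alpha_k\wh{c}_i^k\wh{\mb{u}}_k\cdot\wh{\nabla}\wh{\phi}_k$, which is simply no longer present.

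To close, sum the resulting identities over $k=1,\dots,N-1$ and add the analogous identity for $k=N$ derived from the ${\rm Pe}=0$ form of \eqref{ciNdless}. The capacitor contribution telescopes into $\tfrac12\gamma_k\wh{C}_{\rm m}^k\wh{\phi}_{kN}^2$ exactly as before, and the osmotic terms organize into $\sum_{k=1}^{N-1}\wh{w}_k(\wh{\pi}_{{\rm w}k}-\wh{\pi}_{{\rm w}N})$. The elastic energy $\mathcal{E}$ is reinstated via \eqref{enderiv}: since $\sum_{k=1}^{N-1}\wh{\tau}_k\partial_t\alpha_k=\partial_t\mathcal{E}$ and $\partial_t\alpha_k=-\wh{w}_k$, one has $\sum_{k=1}^{N-1}\wh{\tau}_k\wh{w}_k=-d\mathcal{E}/dt$. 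Subtracting this from the accumulated identity moves $\mathcal{E}$ into $G$ and reassembles $\wh{\psi}_{kN}=\wh{\tau}_k-(\wh{\pi}_{{\rm w}k}-\wh{\pi}_{{\rm w}N})$ inside $I_{\rm mem}$, yielding the dimensionless version of \eqref{FE} verbatim except that the $\alpha_k\wh{\zeta}_k\abs{\wh{\mb{u}}_k}^2$ term is missing. The main obstacle is careful bookkeeping of signs and prefactors rather than any genuinely new analytical difficulty.
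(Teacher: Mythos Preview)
Your proposal is correct and matches the paper's approach, which simply states that the proof is the same as (and simpler than) that of Theorem~\ref{FEthm}. Your explicit handling of the elastic energy---adding and subtracting $\sum_{k=1}^{N-1}\wh{\tau}_k\wh{w}_k$ to recover $\mathcal{E}$ in $G$ and $\wh{\psi}_{kN}$ in $I_{\rm mem}$---is the natural substitute for the step \eqref{mucik5}--\eqref{enderiv} once the Darcy relation has decoupled, and is exactly what ``simpler'' amounts to here.
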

\begin{proof}
The proof is exactly the same, and simpler, than the proof of 
Theorem \ref{FEthm}.
\end{proof}

Related to the above is the limit when $A$ in \eqref{Adef} is small.
This is the limit in which the membrane is mechanically soft.
In this case, $\wh{p}_k=\wh{p}_N$ to leading order. A calculation 
analogous to the one used to derive \eqref{pNeqn} yields:
\begin{equation}
0=\wh{\nabla}\cdot
\paren{\sum_{k=1}^N\paren{\alpha_k\wh{\zeta}_k^{-1}\paren{\nabla\paren{\wh{p}_N-\frac{\wh{a}_k}{\alpha_k}}+\sum_{i=1}^Nz_i\wh{c}_i^k\wh{\nabla}\wh{\phi}_k}}}
\end{equation}
Now, suppose in addition that $\epsilon$ is small so that 
the right hand side of \eqref{cmkdless} and \eqref{cmNdless} is $0$
to leading order. Then, the above may be further rewritten as:
\begin{equation}
0=\wh{\nabla}\cdot
\paren{\sum_{k=1}^N\paren{\alpha_k\wh{\zeta}_k^{-1}\paren{\nabla\paren{\wh{p}_N-\frac{\wh{a}_k}{\alpha_k}}
-\frac{z_0^k\wh{a}_k}{\alpha_k}\wh{\nabla}\wh{\phi}_k}}}
\end{equation}
If the amount of immobile solute is low, $\wh{a}_k$ 
is small, and therefore, we find that $p_N$ satisfies a homogeneous 
elliptic equation. Given the boundary conditions \eqref{bc}, this implies 
that $p_N$ is constant everywhere. From this, it is easily seen that 
$\wh{\mb{u}_k}$ must also be $0$ to leading order. Thus, 
in the soft membrane limit, if the amount of immobile solute is low, 
we may conclude that fluid flow is negligible. 

\subsection{Electroneutral Limit and Electrotonic Effects}\label{ENEton}

The electroneutral limit is when we let $\epsilon\to 0$ in \eqref{cmkdless}
and \eqref{cmNdless}. 
These charge capacitor relations reduce to the 
electroneutrality condition. Under appropriate circumstances,
this should be a reasonable approximation given the smallness of $\epsilon$.
In this case, the electrostatic potentials
$\phi_k$ are determined so that the constraint of electroneutrality is 
satisfied at each instant of time. This electroneutral model also satisfies
the free energy identity.
\begin{proposition}
Set $\epsilon=0$ in \eqref{cmkdless} and \eqref{cmNdless}, and 
let $\wh{c}_i^k, \wh{\mb{u}}_k, \wh{\phi}_k$ and $\wh{p}_k$ satisfy 
the resulting model equations. Then, the dimensionless 
version of \eqref{FE} is satisfied 
without the capacitive energy term $C_{\rm m}\phi_{kN}^2$ in $G$.
\end{proposition}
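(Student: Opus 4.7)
The plan is to mimic the proof of Theorem \ref{FEthm}, tracking exactly where the charge capacitor relation \eqref{cmk}/\eqref{cmN} was invoked and replacing it with the electroneutrality condition (the $\epsilon=0$ limit of \eqref{cmkdless}/\eqref{cmNdless}). The key observation is that in the original proof, the capacitor relation was used at only one place: to rewrite $\sum_{i=1}^M z_iF\phi_k\PD{(\alpha_k c_i^k)}{t}$ as $\gamma_k C_{\rm m}\phi_k \PD{\phi_{kN}}{t}$, which upon summation in $k$ produced the time derivative of $\tfrac{1}{2}\gamma_k C_{\rm m}\phi_{kN}^2$. If instead we impose electroneutrality, $z_0^k F a_k + \sum_{i=1}^M z_iF\alpha_k c_i^k = 0$ for every $k$, and since $a_k$ is constant in time, differentiating yields
\begin{equation*}
\sum_{i=1}^M z_iF \PD{(\alpha_k c_i^k)}{t}=0, \quad k=1,\dots,N,
\end{equation*}
so the problematic sum is identically zero rather than a capacitive time derivative. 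Hence the entire capacitive contribution drops out of the identity.

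Concretely, first I would repeat the derivation leading to \eqref{mucik3} (in its dimensionless form); nothing there uses the capacitor relation, so it carries over verbatim. Next I would sum in $i=1,\dots,M$, and at the place where \eqref{mucik4} was invoked, substitute the electroneutral identity above to conclude that the $z_iF\phi_k\PD{(\alpha_k c_i^k)}{t}$ term vanishes altogether. Then the computation \eqref{mucik5}, which depends only on \eqref{darcy} and \eqref{alphak}, goes through unchanged, producing the hydraulic dissipation term, the membrane water-flux contribution, and the mechanical-pressure term $-p_k \PD{\alpha_k}{t}$.

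Summing the resulting analogue of \eqref{mucik} over $k=1,\dots,N-1$ and adding the corresponding relation for $k=N$ (obtained from \eqref{ciN} in exactly the same way, again with the capacitive term now absent by electroneutrality), the telescoping identity \eqref{enderiv} involving $\tau_k$ and $\PD{\mathcal{E}}{t}$ still holds since it relies only on \eqref{incomp}, \eqref{pk}, and \eqref{taukelas}. What emerges is precisely the dimensionless version of \eqref{FE} with the capacitive term $\tfrac{1}{2}\gamma_k C_{\rm m}\phi_{kN}^2$ deleted from $G$ while $I_{\rm bulk}$ and $I_{\rm mem}$ retain their form.

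There is no genuine obstacle; the only subtlety worth flagging in the write-up is to emphasize that although $\phi_k$ is no longer determined by an explicit algebraic relation with the concentrations (it is now a Lagrange multiplier enforcing electroneutrality), the argument never needs a constitutive expression for $\phi_k$ — it only needs that the charge density $z_0^k F a_k + \sum_i z_iF\alpha_k c_i^k$ is stationary in time, which is built into the constraint. This is why the proof is in fact simpler than that of Theorem \ref{FEthm}.
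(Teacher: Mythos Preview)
Your proposal is correct and follows precisely the route the paper intends: the paper's own proof consists of the single sentence ``The proof is identical to that of Theorem \ref{FEthm},'' and you have correctly identified the one place (equation \eqref{mucik4}) where the capacitor relation enters and shown that under electroneutrality the corresponding term vanishes, so the capacitive energy drops from $G$. Your remark that $\phi_k$ functions as a Lagrange multiplier and that the argument only needs stationarity of the charge density is a nice clarification that the paper does not spell out.
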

\begin{proof}
The proof is identical to that of Theorem \ref{FEthm}.
\end{proof}
It is also possible to set both $\epsilon$ and ${\rm Pe}$ to $0$, in which 
case we again obtain a model that satisfies \eqref{FE} without the 
capacitive energy and hydraulic dissipation terms. The electroneutral 
reduction is an excellent model when fast electrophysiological 
processes (such as action potential generation) does not play 
a significant role, as we shall now see.  

Another important limit is obtained by scaling time differently.
First, let us take the derivative of \eqref{cmkdless} with 
respect to $\wh{t}$.
\begin{equation}\label{cableeqn0}
\epsilon\wh{C}_{\rm m}^k\PD{\wh{\phi}_{kN}}{\wh{t}}
=\sum_{i=1}^M z_i
\paren{-{\rm Pe}\wh{\nabla}\cdot(\alpha_k\wh{\mb{u}}_k\wh{c}_i^k)
+\wh{\nabla}\cdot\paren{\wh{D}_i^k\paren{\wh{\nabla}\wh{c}_i^k+z_i\wh{c}_i^k\wh{\nabla}\wh{\phi}_k}}
-\wh{g}_i^k},
\end{equation}
where we used \eqref{cikdless}. The above equation suggests the 
following rescaling of time:
\begin{equation}\label{tauEtauD}
t=\tau_D\wh{t}=\tau_E\wh{t}_E, \; \tau_E=\epsilon \tau_D.
\end{equation}
As we shall see, $\tau_E$ is the electrotonic time scale, in which 
cable effects are dominant. With this new scaling, \eqref{cableeqn0}
becomes:
\begin{equation}\label{cableeqn}
\wh{C}_{\rm m}^k\PD{\wh{\phi}_{kN}}{\wh{t}_E}
=\sum_{i=1}^M z_i
\paren{-{\rm Pe}\wh{\nabla}\cdot(\alpha_k\wh{\mb{u}}_k\wh{c}_i^k)
+\wh{\nabla}\cdot\paren{\wh{D}_i^k\paren{\wh{\nabla}\wh{c}_i^k+z_i\wh{c}_i^k\wh{\nabla}\wh{\phi}_k}}
-\wh{g}_i^k}.
\end{equation}
Rescaling time to $\wh{t}_E$ in 
\eqref{akdless}, \eqref{aNdless}, \eqref{cikdless}
and \eqref{ciNdless}, we see that, to leading order in $\epsilon$, 
$\wh{\alpha}_k$ and $\wh{c}_i^k$ do not change in time. 
Assume that $\wh{c}_i^k$ and $\alpha_k$ are spatially uniform initially. 
Then, $\wh{c}_i^k$ and $\alpha_k$ will remain spatially uniform in 
the $\tau_E$ time scale. We may therefore treat $\alpha_k$
and $\wh{c}_i^k$ as constants in space and time. 
Assume in addition that the P\'eclet number ${\rm Pe}\to 0$.
Then, \eqref{cableeqn} reduces to:
\begin{equation}\label{bid1}
\wh{C}_{\rm m}^k\PD{\wh{\phi}_{kN}}{\wh{t}_E}
=\wh{\nabla}\cdot\paren{\sigma_k\wh{\nabla}\wh{\phi}_k}
-\wh{I}_k, \; \sigma_k=\sum_{i=1}^M z_i^2\wh{D}_i^k\wh{c}_i^k, \; \wh{I}_k=\sum_{i=1}^M z_i\wh{g}_i^k.
\end{equation}
Likewise, we may obtain the equation for compartment $k=N$:
\begin{equation}\label{bid2}
-\sum_{k=1}^{N-1}\wh{C}_{\rm m}^k\PD{\wh{\phi}_{kN}}{\wh{t}_E}
=\wh{\nabla}\cdot\paren{\sigma_N\wh{\nabla}\wh{\phi}_N}
+\sum_{k=1}^{N-1}\wh{I}_k, \; \sigma_N=\sum_{i=1}^M z_i^2\wh{D}_i^N\wh{c}_i^N.
\end{equation}
In both \eqref{bid1} and \eqref{bid2}, $\sigma_k$ may be interpreted as the extracellular 
and intracellular conductivities, and $I_k$ is the transmembrane electric current 
flowing across the $k$-th membrane. 
We must also rescale time in \eqref{gateeqndless}:
\begin{equation}\label{bid3}
\frac{\delta}{\epsilon}\PD{s_{mg}^k}{\wh{t}_E}=\wh{Q}_{mg}.
\end{equation}
The constants $\delta$ and $\epsilon$ are typically of comparable magnitude.
If we specialize equations \eqref{bid1}, \eqref{bid2} and \eqref{bid3}
to the case $N=2$, this is nothing other than 
the bidomain equations of cardiac electrophysiology.
In the electrotonic time scale $\tau_E$, electrodiffusive effects 
are thus completely captured by electrical circuit theory, which is the 
usual starting point for deriving the bidomain equations. The bidomain 
equations are a successful model in describing action potential propagation 
in cardiac tissue.

An important property of our full system of equations, therefore, is that
it contains cable theory, or electrical circuit theory, as a submodel.
Action potential propagation is a fast electrophysiological process
in contrast to the relatively slow movement of ions that accompanies 
electrolyte and cell volume homeostasis. Our model makes it possible to 
study the interplay between the fast and slow electrophysiological processes.
The model, however, is very stiff in that it contains two disparate 
time scales, whose ratio is on the order of 
$\epsilon \approx 10^{-4}\sim 10^{-5}$.

\section{Numerical Method}\label{sectnum}

In this Section, we describe a numerical method to solve the above 
system of equations. We have developed a numerical scheme that allows for 
the solution of the above system of equations in one spatial dimension 
when there is no fluid flow (P\'eclet number ${\rm Pe}=0$).
The equations we must solve are therefore \eqref{akdless}, \eqref{aNdless}, 
\eqref{cikdless}, \eqref{ciNdless}, \eqref{cmkdless}, \eqref{cmNdless}
and \eqref{gateeqndless}. Given the presence of disparate time scales 
in the model, the model is numerically stiff. This necessitates 
the use of an implicit scheme for efficient computation.
The implicit scheme proposed here designed to satisfy discrete 
ion conservation and a discrete free energy identity.  

The dimensionless system will be used to describe our numerical method.
The symbol $\wh{\cdot}$ will be removed from all variables to 
avoid cluttered notation.
Our system is described completely by $\alpha_k, c_i^k$, and the gating 
variables $s_g^m$. Note that $\phi_k$ is determined by these variables, 
and is not needed to advance to the next time step.
We use a splitting scheme for time stepping, alternating between 
the update of $\alpha_k, c_i^k$ and of $s_g^m$. For each of these substeps,
a backward Euler type time discretization is used.  

Let $L$ be the length of the domain, $\Delta x$ be the spatial grid size
and $N_x$ be the number of grids so that $N_x\Delta x=L$.
We take a finite-volume point of view. The physical variables 
at the $l$-th grid, $(l-1)\Delta x\leq x\leq l\Delta x$, should 
be thought of as the average value over this grid, or the value 
at the midpoint of the grid.
We let the time step be $\Delta t$. 
Let $\alpha_{kl}^n,c_{il}^{kn}, \phi_{kl}^n, s_{gl}^{mn}$ be the discretized values of 
$\alpha_k,c_i^k, \phi_k$ and $s_g^m$ at the $l$-th grid at time $t=n\Delta t$. 

Let $u_l^n$ be the value of a physical quantity at the $l$-th grid, $(l-1)\Delta x\leq x\leq l\Delta x$,
and time $t=n\Delta t$. 
Introduce the following operators:
\begin{equation}
\begin{split}
\mc{D}_x^+u_l^n&=\frac{u_{l+1}^n-u_l^n}{\Delta x},\;\mc{D}_x^-u_l^n=\frac{u_l^n-u_{l-1}^n}{\Delta x}, \;
\mc{A}_x^+ u_l^n=\frac{1}{2}(u_l^n+u_{l+1}^n),\\
\mc{D}_t^- u_l^n&=\frac{u_l^n-u_l^{n-1}}{\Delta t}.
\end{split}
\end{equation}

{\em Step 1}.
In the first substep we update $\alpha_{kl}^{n}, c_{il}^{kn}$ and obtain $\phi_{kl}^n$.
We discretize equations \eqref{akdless} as follows:
\begin{equation}\label{step1alphak}
\mc{D}_t^-\alpha_{kl}^n=-w_{kl}^n, \; w_{kl}^n=\sum_{m=1}^{N_{\rm c}}w_{km}((l-1/2)\Delta x,\mb{c}_l^{kn},\mb{c}_l^{Nn},\alpha_{kl}^n,\alpha_{Nl}^n)
\end{equation}
where $\mb{c}_l^{kn}=(c_{1l}^{kn},\cdots,c_{Ml}^{kn})$ and $\mb{c}_l^{Nn}=(c_{1l}^{Nn},\cdots,c_{Ml}^{Nn})$.
we have used \eqref{waterionmemflux} and an example of the constitutive relation
for $w_{km}$ was given in \eqref{linwaterflow}. 
In place of \eqref{aNdless}, we use \eqref{incomp} for $\alpha_N$:
\begin{equation}\label{step1alphaN}
\alpha_{Nl}^n=1-\sum_{k=1}^{N-1}\alpha_{kl}^n.
\end{equation}
For equations \eqref{cikdless}, we have:
\begin{equation}\label{step1cik}
\begin{split}
&\mc{D}_t^-(\alpha_{kl}^nc_{il}^{kn})=-\mc{D}_x^- f_{il}^{kn}-g_{il}^{kn},\\
&f_{il}^{kn}=
\begin{cases}
-D_i^k(\alpha_{kl}^{n-1})(\mc{A}_x^+c_{il}^{k,n-1})
\paren{\mc{D}_x^+(\ln(c_{il}^{kn})+z_i\phi_{kl}^n)} &\text{ for } 1\leq l\leq N_x-1\\ 
0 &\text{ for } l=0, N_x.
\end{cases}
\end{split}
\end{equation}
We have set the flux $f_{il}^{kn}$ to $0$ at $l=0$ and $l=N_x$ to reflect the no-flux boundary conditions
of \eqref{bc}. 
The above discretization of the flux $f_{il}^{kn}$ was chosen so that the discrete evolution satisfies 
a discrete energy inequality similar to \eqref{FE}, as we shall see below. 
One may wonder whether 
the partially explicit treatment of the flux term in \eqref{step1cik}
may result in numerical instabilities. To address this issue, 
we have also implemented a scheme in which 
the flux term is discretized as follows:
\begin{equation}\label{step1fluxalt}
f_{il}^{kn}=
\begin{cases}
-D_i^k(\alpha_{kl}^{n-1})
\paren{\mc{D}_x^+c_{il}^{kn}+z_i(\mc{A}_x^+c_{il}^{kn})(\mc{D}_x^+\phi_{kl}^n)} &\text{ for } 1\leq l\leq N_x-1\\ 
0 &\text{ for } l=0, N_x.
\end{cases}
\end{equation}
Numerical experimentation indicates that 
the use of either \eqref{step1cik} or \eqref{step1fluxalt} does not significantly 
alter the stability properties of the numerical scheme. 

We must specify $g_{il}^{kn}$. 
\begin{equation}\label{step1gik}
\begin{split}
g_{il}^{kn}&=j_{il}^{kn}+h_{il}^{k,n-1}, \\
j_{il}^{kn}&=\sum_{m=1}^{N_{\rm c}}j_{im}^k((l-1/2)\Delta x,\mb{s}_{ml}^{k,n-1},\mb{c}_l^{kn},\mb{c}_l^{Nn},\phi_{kN,l}^n), \\
h_{il}^{k,n-1}&=\sum_{m=1}^{N_{\rm p}}h_{im}^k((l-1/2)\Delta x,\mb{c}_l^{k,n-1},\mb{c}_l^{N,n-1},\phi_{kN,l}^{n-1}),
\end{split}
\end{equation}
where $\mb{c}_l^{kn}, \mb{c}_l^{Nn},
\mb{s}_{ml}^{kn}$ and $\phi_{kN,l}^n$ are the vector of ionic concentrations in compartments $k$ and $N$, 
gating variables and the membrane potential 
evaluated at grid $l$ and time $n\Delta t$.
In the above, we used \eqref{waterionmemflux} and \eqref{hik}, and typical constitutive relations for 
$j_{im}^k$ are given in \eqref{jimk}, \eqref{HH} and \eqref{GHK}. 
Note that 
we only treat the passive flux $j_{il}^{kn}$ implicitly (but not with respect to the 
gating variables $\mb{s}$), and treat the active flux explicitly.
An implicit treatment of $j_{il}^{kn}$ is necessitated by the dissipative character of $j_{il}^{kn}$;
an explicit treatment is prone to numerical instabilities.
Equation \eqref{ciNdless} is discretized in the same way as \eqref{cikdless}:
\begin{equation}\label{step1ciN}
\mc{D}_t^-(\alpha_{Nl}^nc_{il}^{Nn})=-\mc{D}_x^- f_{il}^{Nn}+\sum_{k=1}^{N-1}g_{il}^{Nn}
\end{equation}
where $f_{il}^{Nn}$ is discretized exactly as in \eqref{cikdless}.

The capacitance-charge relation \eqref{cmkdless} 
and \eqref{cmNdless} are discretized as follows.
\begin{equation}\label{step1cm}
\begin{split}
\epsilon C_{\rm m}^k\phi_{kN,l}^{n}&=\rho_0^k+\sum_{i=1}^M z_i\alpha_{kl}^{n}c_{il}^{kn},\\
-\epsilon \sum_{k=1}^{N-1}C_{\rm m}^k\phi_{kN,l}^{n}&=\rho_0^N+\sum_{i=1}^M z_i\alpha_{Nl}^{n}c_{il}^{Nn}.
\end{split}
\end{equation}
The electrostatic potential is determined only up to a constant. This arbitrariness 
is eliminated by setting $\phi_{NN_x}^n=0$. 

The reader will realize that the scheme just described is essentially a backward Euler scheme.
We note that an explicit discretization will lead to unacceptably severe time step restrictions, 
not so much because of ionic diffusion, but because of the electrotonic diffusion of the 
membrane potential. As we discussed in Section \ref{ENEton}, our system has, embedded within it, 
the cable model or bidomain model of membrane potential propagation. The time scale for 
the spread of membrane potential is faster by a factor of $1/\epsilon$, the ratio 
between the time scales $\tau_D$ and $\tau_E$ in \eqref{tauEtauD}. The rapid electrotonic 
spread of membrane potential necessitates implicit time stepping.

The algebraic system of equations 
for the first substep thus consists of equations \eqref{step1alphak}, \eqref{step1alphaN}, 
\eqref{step1cik}, \eqref{step1gik}, \eqref{step1ciN} and \eqref{step1cm}.
We first use \eqref{step1alphaN} to eliminate $\alpha_{Nl}^n$ from the equations 
and solve the resulting algebraic system.
These equations are nonlinear, and are
solved using Newton's method. With the appropriate ordering 
of the variables, each Newton iteration results 
in a Jacobian matrix that is banded. 
The linear system is solved using a direct solver.

{\em Step 2}. In the second substep, the gating variables are updated. We discretize \eqref{gateeqndless}
as follows:
\begin{equation}\label{gateupdate}
\delta \mc{D}_t^- s_{mg,l}^{kn}=Q_{mg}(\mb{s}_{ml}^{kn},\mb{c}_l^{kn},\mb{c}_l^{Nn},\phi_{kN,l}^n).
\end{equation}
Notice that the above equation is implicit only in the gating variables $\mb{s}$ since 
the ionic concentrations and the membrane potential are known quantities as a result 
of solving the equations from Step 1.
In equation \eqref{gateupdate}, the equations for 
each grid point are decoupled, and we have only to solve a small algebraic 
system at each grid point. In the models we have implemented, 
the functions $Q_{mg}$ are 
linear in $\mb{s}$ (see \eqref{gatealphabeta} of \ref{appSDmemflux})
and it is thus a simple matter to 
solve \eqref{gateupdate}.

These two steps constitute one time step.

We note two important properties of the system of equations. 
First, we have discrete conservation of ions, in the following sense:
\begin{equation}
\mc{D}_t^-\paren{\sum_{k=1}^{N}\sum_{l=1}^{N_x} c_{il}^{kn}\Delta x}=0
\end{equation}
for all $i=1,\cdots,M$. One simple consequence of this property 
is that we also have discrete conservation of charge. Discrete conservation 
of charge is crucial for a stable numerical scheme, especially when $\epsilon$
is taken very small in \eqref{step1cm}. 
Second, we have the following discrete free energy inequality.
\begin{proposition}\label{FEdiscthm}
The solutions to \eqref{step1alphak}, \eqref{step1alphaN}, 
\eqref{step1cik}, \eqref{step1gik}, \eqref{step1ciN} and \eqref{step1cm} satisfy 
the following discrete free energy inequality.
\begin{equation}\label{FEdisc}
\begin{split}
\mc{D}_t^-G^n&\leq -I^n_{\rm bulk}-I_{\rm mem}^n, \\
G^n&=\sum_{l=1}^{N_x}\paren{\sum_{k=1}^N \paren{a_{kl}\ln\paren{\frac{a_{kl}}{\alpha_{kl}^n}}+\sum_{i=1}^M 
\alpha_{kl}^nc_{il}^{kn}\ln c_{il}^{kn}}+\sum_{k=1}^{N-1}\frac{\epsilon}{2}C_m^k(\phi_{kN,l}^n)^2}\Delta x,\\
I^n_{\rm bulk}&=\sum_{l=1}^{N_x-1}\paren{\sum_{k=1}^N\sum_{i=1}^MD_i^k(\alpha_{kl}^{n-1})\paren{\mc{A}_x^+c_{il}^{k,n-1}}
(\mc{D}_x^+\mu_{il}^{kn})^2}\Delta x,\\
I^n_{\rm mem}&=\sum_{l=1}^{N_x}\paren{\sum_{k=1}^{N-1}\paren{\psi_{kN,l}^nw_{kl}^n+\sum_{i=1}^M\mu_{il}^{kN,n}g_{il}^{kn}}}\Delta x,
\end{split}
\end{equation}
where $a_{kl}=a_k(x=(l-1/2)\Delta x)$ is the value of $a_k$ at the $l$-th grid point and 
\begin{equation}
\begin{split}
\mu_{il}^{kn}&=\ln c_{il}^{kn}+1+z_i\phi_{kl}^n, \; \mu_{il}^{kN,n}=\mu_{il}^{kn}-\mu_{il}^{Nn}\\
\psi_{kl}^{n}&=-\paren{\frac{a_{kl}}{\alpha_{kl}^n}+\sum_{i=1}^M c_{il}^{kn}}, \; \psi_{kN,l}^n=\psi_{kl}^n-\psi_{Nl}^n.
\end{split}
\end{equation}
\end{proposition}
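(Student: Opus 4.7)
The plan is to mimic the proof of Theorem \ref{FEthm} at the discrete level, with one systematic modification: each continuous chain-rule identity becomes a one-sided inequality via convexity, which is precisely why \eqref{FEdisc} is an inequality rather than an equality. As in the continuous argument, the starting move is to multiply the concentration updates \eqref{step1cik} and \eqref{step1ciN} by $\mu_{il}^{kn}$, sum over the grid index $l$, and apply summation by parts; the no-flux choice $f_{il}^{kn} = 0$ at $l = 0, N_x$ kills the boundary contributions.

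Three convexity-based inequalities drive the argument. First, joint convexity of the function $(u,\alpha) \mapsto u\ln(u/\alpha)$ on $(0,\infty)^2$, applied with $u = \alpha_k c_i^k$ and the supporting hyperplane evaluated at the new time level, yields
\begin{equation*}
\mc{D}_t^-(\alpha_{kl}^n c_{il}^{kn} \ln c_{il}^{kn}) \le (\ln c_{il}^{kn} + 1)\,\mc{D}_t^-(\alpha_{kl}^n c_{il}^{kn}) - c_{il}^{kn}\,\mc{D}_t^- \alpha_{kl}^n,
\end{equation*}
which is the discrete analog of $(\ln c + 1)\partial_t(\alpha c) = \partial_t(\alpha c \ln c) + c\,\partial_t \alpha$. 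Second, convexity of $-\ln\alpha$ gives $\mc{D}_t^-(a_{kl}\ln(a_{kl}/\alpha_{kl}^n)) \le -(a_{kl}/\alpha_{kl}^n)\mc{D}_t^- \alpha_{kl}^n$. Third, convexity of $x^2/2$ gives $\phi_{kN,l}^n \mc{D}_t^- \phi_{kN,l}^n \ge \tfrac{1}{2}\mc{D}_t^-((\phi_{kN,l}^n)^2)$. All three replace equalities used in the continuous proof.

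With these in hand I would process each piece of $\mc{D}_t^- G^n$. For the bulk flux term, the discretization in \eqref{step1cik} is the key ingredient: since $\mc{D}_x^+$ annihilates the constant $1$, we may rewrite $f_{il}^{kn} = -D_i^k(\alpha_{kl}^{n-1})(\mc{A}_x^+ c_{il}^{k,n-1})\,\mc{D}_x^+ \mu_{il}^{kn}$, and summation by parts then converts $-\sum_l \mu_{il}^{kn}\,\mc{D}_x^- f_{il}^{kn}\,\Delta x$ into exactly minus the integrand of $I_{\rm bulk}^n$. The $z_i\phi_k^n \mc{D}_t^-(\alpha_k c_i^k)$ pieces, summed over $i$ and $k$, telescope through the discrete capacitor relation \eqref{step1cm} to $\epsilon \sum_{k<N} C_m^k \phi_{kN,l}^n \mc{D}_t^- \phi_{kN,l}^n$, which the third convex inequality converts into the discrete time derivative of the capacitive energy in $G^n$. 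The $c_i^{kn}\mc{D}_t^- \alpha_k$ and $-(a_k/\alpha_k^n)\mc{D}_t^- \alpha_k$ pieces, after substituting $\mc{D}_t^- \alpha_k^n = -w_k^n$ for $k < N$ from \eqref{step1alphak} and $\mc{D}_t^- \alpha_N^n = \sum_{k<N} w_k^n$ from \eqref{step1alphaN}, become linear combinations of the $w_k^n$'s.

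The main obstacle is the final bookkeeping. One must verify that the residual coefficient of each $w_k^n$ after all cancellations is exactly $-\psi_{kN,l}^n$; this happens because the $a_k/\alpha_k^n$ contributions from the immobile-solute entropy combine with the $\sum_i c_i^{kn}$ contributions from the mobile-ion entropy to reconstruct the discrete osmotic-pressure difference that defines $\psi_{kN,l}^n$ in the statement. Simultaneously, the ionic source terms telescope across compartments: the $-g_{il}^{kn}$ from \eqref{step1cik} pairs with the $+\sum_{k<N}g_{il}^{kn}$ source in \eqref{step1ciN} to give $-\sum_{k<N}\mu_{il}^{kN,n} g_{il}^{kn}$ after summation in $i$, matching the second piece of $I_{\rm mem}^n$. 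No individual step is difficult, but each convex inequality must be oriented correctly for the signs to line up; this orientation is what dictates the specific mix of implicit and explicit time levels in \eqref{step1cik} (notably, $\ln c^{kn}$ implicit inside $\mu_{il}^{kn}$, but the coefficient $\mc{A}_x^+ c_{il}^{k,n-1}$ taken explicitly at $n-1$).
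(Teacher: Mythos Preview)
Your proposal is correct and follows essentially the same route as the paper's proof: multiply the discrete concentration equations by $\mu_{il}^{kn}$, sum by parts using the no-flux boundary fluxes, invoke the discrete capacitor relation, and replace the three continuous chain-rule identities by one-sided inequalities to account for the backward-Euler discretization. The only cosmetic difference is that you package the key inequality for $\mc{D}_t^-(\alpha_{kl}^n c_{il}^{kn}\ln c_{il}^{kn})$ via joint convexity of $(u,\alpha)\mapsto u\ln(u/\alpha)$, whereas the paper derives the same bound by an explicit algebraic rearrangement together with $\ln u\le u-1$; the two arguments are equivalent.
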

Inequality \eqref{FEdisc} is similar to the continuous version, \eqref{FE} of Theorem \ref{FEthm}.
The crucial difference, however, is that we have a free energy {\em inequality} rather than a free energy {\em equality}.
The difficulty in the discrete case is that certain relations that are true for derivatives 
fail to hold for difference operators. With backward Euler type discretizations, however, the 
equalities fail with a definite sign so that we may still obtain inequalities.
\begin{proof}[Proof of Proposition \ref{FEdiscthm}]
The proof is essentially the same as Thoerem \ref{FEthm} except that there are certain steps 
in which equalities are replaced by inequalities. 
Multiply \eqref{step1alphak} by $\psi_{kl}^n$. The left hand side yields:
\begin{equation}
\psi_{kl}^n\mc{D}_t^-\alpha_{kl}^n=-\paren{\sum_{i=1}^M c_{il}^{kn}}\mc{D}_t^-\alpha_{kl}^n
-\frac{a_{kl}}{\alpha_{kl}^n}\mc{D}_t^-\alpha_{kl}^n.
\end{equation}
Now, 
\begin{equation}
\begin{split}
-\frac{a_{kl}}{\alpha_{kl}^n}\mc{D}_t^-\alpha_{kl}^n
=a_{kl}\paren{\frac{\alpha_{kl}^{n-1}}{\alpha_{kl}^n}-1}\geq a_{kl}\ln \paren{\frac{\alpha_{kl}^{n-1}}{\alpha_{kl}^n}}
=D_t^-\paren{a_{kl}\ln \paren{\frac{a_{kl}}{\alpha_{kl}^n}}},
\end{split}
\end{equation}
where we used the inequality:
\begin{equation}\label{lnlinineq}
\ln u\leq u-1 \text{ for } u>0.
\end{equation}
We thus have:
\begin{equation}
D_t^-\paren{a_{kl}\ln \paren{\frac{a_{kl}}{\alpha_{kl}^n}}}-\paren{\sum_{i=1}^M c_{il}^{kn}}\mc{D}_t^-\alpha_{kl}^n
\leq -\psi_{kl}^nw_{kl}^n.
\end{equation}
A similar calculation can be performed for $\alpha_{Nl}^n$. We obtain:
\begin{equation}
D_t^-\paren{a_{Nl}\ln \paren{\frac{a_{Nl}}{\alpha_{Nl}^n}}}-\paren{\sum_{i=1}^M c_{il}^{Nn}}\mc{D}_t^-\alpha_{Nl}^n
\leq \psi_{Nl}^n\sum_{k=1}^{N-1}w_{kl}^n.
\end{equation}
From the two relations above, we obtain
\begin{equation}\label{discwater}
D_t^-\paren{\sum_{k=1}^Na_{kl}\ln \paren{\frac{a_{kl}}{\alpha_{kl}^n}}}-
\sum_{k=1}^N\paren{\paren{\sum_{i=1}^M c_{il}^{kn}}\mc{D}_t^-\alpha_{kl}^n}
\leq -\sum_{k=1}^{N-1}\psi_{kN,l}^nw_{kl}^n.
\end{equation}

Let us now turn to \eqref{step1cik}. Multiply the right hand side of \eqref{step1cik} by $\mu_{il}^{kn}$.
\begin{equation}\label{discmucik1}
\mu_{il}^{kn}\mc{D}_t^-(\alpha_{kl}^nc_{il}^{kn})=(\ln c_{il}^{kn}+1)\mc{D}_t^-(\alpha_{kl}^nc_{il}^{kn})+z_i\phi_{kl}^n\mc{D}_t^-(\alpha_{kl}^nc_{il}^{kn}).
\end{equation}
Let us look at the first term.
\begin{equation}\label{discmucik2}
\begin{split}
&(\ln c_{il}^{kn}+1)\mc{D}_t^-(\alpha_{kl}^nc_{il}^{kn})\\
=&\mc{D}_t^-(\alpha_{kl}^nc_{il}^{kn}\ln c_{il}^{kn})-\alpha_{kl}^{n-1}c_{il}^{k,n-1}\ln\paren{\frac{c_{il}^{kn}}{c_{il}^{k,n-1}}}
+\alpha_{kl}^nc_{il}^{kn}-\alpha_{kl}^{n-1}c_{il}^{k,n-1}\\
\leq &
\mc{D}_t^-(\alpha_{kl}^nc_{il}^{kn}\ln c_{il}^{kn})-\alpha_{kl}^{n-1}c_{il}^{k,n-1}\paren{\frac{c_{il}^{kn}}{c_{il}^{k,n-1}}-1}
+\alpha_{kl}^nc_{il}^{kn}-\alpha_{kl}^{n-1}c_{il}^{k,n-1}\\
=&
\mc{D}_t^-(\alpha_{kl}^nc_{il}^{kn}\ln c_{il}^{kn})+c_{il}^{kn}D_t^-\alpha_{kl}^n,
\end{split}
\end{equation}
where we used \eqref{lnlinineq} in the above inequality.
Sum the second term on the right hand side of \eqref{discmucik1} in $i$.
\begin{equation}\label{discmucik3}
\sum_{i=1}^Mz_i\phi_{kl}^n\mc{D}_t^-(\alpha_{kl}^nc_{il}^{kn})=\epsilon C_{\rm m}^k\phi_{kl}^n\mc{D}_t^-\phi_{kN,l}^n.
\end{equation}
Combining \eqref{discmucik1}, \eqref{discmucik2} and \eqref{discmucik3} we obtain:
\begin{equation}
\begin{split}
&\sum_{i=1}^M\paren{\mc{D}_t^-(\alpha_{kl}^nc_{il}^{kn}\ln c_{il}^{kn})+c_{il}^{kn}D_t^-\alpha_{kl}^n}+\epsilon C_{\rm m}^k\phi_{kl}^n\mc{D}_t^-\phi_{kN,l}^n\\
&\leq \sum_{i=1}^M \mu_{il}^{kn}\mc{D}_t^-(\alpha_{kl}^nc_{il}^{kn})
=-\sum_{i=1}^M\mu_{il}^{kn}\mc{D}_x^-f_{il}^{kn}-\sum_{i=1}^M\mu_{il}^{kn}g_{il}^{kn}.
\end{split}
\end{equation}
The last equality follows from \eqref{step1cik}.
We can obtain a similar inequality for $k=N$ using \eqref{step1ciN}, 
and combine this with the above inequality. This yields: 
\begin{equation}\label{discmucik}
\begin{split}
&\sum_{k=1}^N\sum_{i=1}^M\paren{\mc{D}_t^-(\alpha_{kl}^nc_{il}^{kn}\ln c_{il}^{kn})+c_{il}^{kn}D_t^-\alpha_{kl}^n}+\sum_{k=1}^{N-1}\epsilon C_{\rm m}^k\phi_{kN,l}^n\mc{D}_t^-\phi_{kN,l}^n\\
\leq &-\sum_{k=1}^N\sum_{i=1}^M\mu_{il}^{kn}\mc{D}_x^-f_{il}^{kn}-\sum_{k=1}^{N-1}\sum_{i=1}^M\mu_{il}^{kN,n}g_{il}^{kn}.
\end{split}
\end{equation}
It is easily seen that the second sum of the first line satisfies the inequality:
\begin{equation}\label{disccapE}
\sum_{k=1}^{N-1}\epsilon C_{\rm m}^k\phi_{kN,l}^n\mc{D}_t^-\phi_{kN,l}^n\geq \sum_{k=1}^{N-1}\frac{\epsilon}{2}C_{\rm m}^k\mc{D}_t^-(\phi_{kN,l}^n)^2.
\end{equation}
Combining \eqref{discwater}, \eqref{discmucik} and \eqref{disccapE}, we obtain:
\begin{equation}\label{FEtemp1}
\begin{split}
&\mc{D}_t^-\paren{\sum_{k=1}^N \paren{a_{kl}\ln\paren{\frac{a_{kl}}{\alpha_{kl}^n}}+\sum_{i=1}^M 
\alpha_{kl}^nc_{il}^{kn}\ln c_{il}^{kn}}+\sum_{k=1}^{N-1}\frac{\epsilon}{2}C_m^k(\phi_{kN,l}^n)^2}\\
\leq& -\sum_{k=1}^{N-1}\paren{\psi_{kN,l}^nw_{kl}^n+\sum_{i=1}^M\mu_{il}^{kN,n}g_{il}^{kn}}
-\sum_{k=1}^N\sum_{i=1}^M\mu_{il}^{kn}\mc{D}_x^-f_{il}^{kn}
\end{split}
\end{equation}
Note that
\begin{equation}\label{FEtemp2}
\sum_{l=1}^{Nx}\sum_{k=1}^N\sum_{i=1}^M\mu_{il}^{kn}\mc{D}_x^-f_{il}^{kn} \Delta x
=-\sum_{l=1}^{Nx-1}\sum_{k=1}^N\sum_{i=1}^M(\mc{D}_x^+\mu_{il}^{kn})f_{il}^{kn}\Delta x=I_{\rm bulk},
\end{equation}
where we summed by parts in the first equality and used the expression for $f_{il}^{kn}$ in 
\eqref{step1cik} in the second equality. We obtain the desired 
inequality by multiplying \eqref{FEtemp1} by $\Delta x$ and summing in $l$, 
and combining this with \eqref{FEtemp2}.
\end{proof}
Inequality \eqref{FEdisc} ensures that the discrete free energy increases can only come from the 
active flux contribution $h_{il}^{kn}$. Indeed, $I_{\rm bulk}$ is non-negative and the contributions 
from $w_{kl}^n$ and $j_{il}^{kn}$ in $I_{\rm mem}$ are also non-negative given the implicit treatment 
of $w_{kl}^n$ and $j_{il}^{kn}$ (see \eqref{step1alphak} and \eqref{step1gik}) and the 
structural conditions for $w_k$ and $j_i^k$ (see \eqref{dissip}).

\section{Simulation of Cortical Spreading Depression}\label{sectSD}

\subsection{Model Setup}\label{sectSDmodel}
We apply the above model to a computation of cortical spreading depression.
The equations, specialized to this application, will be relisted here 
(in dimensional form) to facilitate discussion.
We treat neural tissue as a biphasic continuum following 
\cite{tuckwell1978mathematical,yao2011continuum}, 
so that we have two compartments ($N=2$). 
Compartment $1$ or ${\rm i}$ is the intracellular (neuronal) 
and compartment $2$ or ${\rm e}$ is the extracellular compartment
(we shall thus use $1,2$ and ${\rm i,e}$ interchangeably for 
subscripts/superscripts of our variables). 
We neglect 
fluid flow, and equations \eqref{alphak} and \eqref{alphaN} are thus
\begin{equation}\label{alphasdeq}
\PD{\alpha_{\rm i}}{t}=-\PD{\alpha_{\rm e}}{t}=-\gamma w.
\end{equation}
Here and in the following, we omit the compartmental subscripts associated with 
membrane quantities (we have only two compartments, and thus only one 
membrane, the neuronal membrane).
The transmembrane water flux $w$ will be specified shortly.

We consider three ionic species Na$^+$, K$^+$ and Cl$^-$.
Equations for ionic concentrations \eqref{cik}, \eqref{ciN} and \eqref{fik}
reduce to
\begin{align}
\PD{(\alpha_{\rm i}c_i^{\rm i})}{t}&=
\PD{}{x}\paren{D_i^{\rm i}\paren{\PD{c_i^{\rm i}}{x}+\frac{z_iFc_i^{\rm i}}{RT}\PD{\phi_{\rm i}}{x}}}-\gamma g_i
\label{ciSD}\\
\PD{(\alpha_2c_i^{\rm e})}{t}&=
\PD{}{x}\paren{D_i^{\rm e}\paren{\PD{c_i^{\rm e}}{x}+\frac{z_iFc_i^{\rm e}}{RT}\PD{\phi_{\rm e}}{x}}}+\gamma g_i
\label{ceSD}
\end{align}
where $i=1,2,3$ corresponding to Na$^+$, K$^+$ and Cl$^-$ respectively.
Following \cite{yao2011continuum}, we let the diffusion coefficient 
in the extracellular space be given by:
\begin{equation}
D_i^{\rm e}=D_i^*\alpha_{\rm e}
\end{equation}
where $D_i^*$ is the diffusion coefficient in aquaeous solution. The diffusion 
coefficient in the extracellular space thus decreases with volume fraction. 
The diffusion coefficient in the intracellular space 
$D_i^{\rm i}$ reflects gap junction connectivity. We let
\begin{equation}\label{chiDi}
D_i^{\rm i}=\chi D_i^*
\end{equation}
where $\chi$ is a constant to be varied in the simulations to follow.
The electrostatic potentials $\phi_1$ and $\phi_2$ are specified by the following capacitance 
charge relation \eqref{cmk} and \eqref{cmN}
\begin{equation}\label{chargecapSD}
\gamma C_{\rm m}\phi_{\rm m}=z_0^{\rm i}Fa_{\rm i}+\sum_{i=1}^3 z_iF\alpha_{\rm i}c_i^{\rm i}
=-\paren{z_0^{\rm e}Fa_{\rm e}+\sum_{i=1}^3 z_iF\alpha_{\rm e}c_i^{\rm e}}, \; 
\phi_{\rm m}=\phi_{\rm i}-\phi_{\rm e}.
\end{equation}
Constants that appear in the above equations are listed 
in Table \ref{model_params}. The amount of impermeable ions, $a_{\rm i}$
and $a_{\rm e}$ are specified together with the initial data 
(see \eqref{imp_solutes} of \ref{appSDinit}.)

\begin{table}
\begin{center}
\begin{tabular}{|c|c|c|c|}
\hline
$\gamma^{-1}$(cm) &$1.5662\times 10^{-4}$ & $D_{\rm Na}^*$ (cm$^2$/s)&$1.33\times 10^{-5}$\\
$\wh{\eta}_{\rm w}$(cm$/$s/(mmol$/$l))& $5.4\times 10^{-2}$& $D_{\rm K}^*$ (cm$^2$/s)&$1.96\times 10^{-5}$\\
$C_{\rm m}$ ($\mu$F/cm$^2$) & 0.75 & $D_{\rm Cl}^*$ (cm$^2$/s)&$2.03\times 10^{-5}$\\
$T$ (K$^\circ$)& 310.15 & $z_0$ & -1\\
\hline
\end{tabular}
\caption{Model parameters. Standard values are used 
for the Faraday constant $F$ and ideal gas constant $R$.
\label{model_params}}
\end{center}
\end{table}

Transmembrane water flow $w$ in \eqref{alphasdeq} is given by
the constitutive relation (see \eqref{linwaterflow})
\begin{equation}\label{wsd}
w=\eta^{\rm w}\paren{\pi_{{\rm we}}-\pi_{{\rm wi}}}=\wh{\eta^{\rm w}}
\paren{\frac{a_{\rm e}}{\alpha_{\rm e}}+\sum_{i=1}^3 c_i^{\rm e}-\frac{a_{\rm i}}{\alpha_{\rm i}}-\sum_{i=1}^3 c_i^{\rm i}}.
\end{equation}
We have set the elastic force to be $\tau_{\rm i}=\tau_1=0$ so that 
$\psi_{\rm i,e}=\psi_{12}=\psi_1-\psi_2$ is equal to $\pi_{{\rm w}2}-\pi_{{\rm w}1}$ (see \eqref{pk},
\eqref{psik}). The value of $\wh{\eta}_{\rm w}$ 
is given in Table \ref{model_params}.
Prescription \eqref{wsd} 
is essentially equivalent to that in 
\cite{yao2011continuum,shapiro2001osmotic}, except that 
we do not impose the constraint that $\alpha_{\rm i}$ must not exceed $0.95$.
As $\alpha_1$ approaches $1$, $\alpha_{\rm e}=1-\alpha_{\rm i}$ approaches $0$ and 
thus $\pi_{{\rm we}}$ grows large so long as $a_{\rm e}>0$. 
The resulting large osmotic force does not allow $\alpha_{\rm i}$ to 
become arbitrarily close to $1$. 

We use the ion channel models of 
\cite{kager2000simulated,kager2002conditions,yao2011continuum} 
for our simulations
which we describe in \ref{appSDmemflux}. 
Specification of initial data is discussed in \ref{appSDinit}.

\subsection{Simulation Results}\label{sectsimresults}

We set the length $L$ of our one-dimensional domain to be equal to $1$cm.
To initiate a spreading depression wave, excitatory fluxes $j_{i{\rm E}}$
are added as in \eqref{transmembranefluxes}. We set
\begin{equation}\label{excitation}
\begin{split}
j_{iE}&=G_{\rm E}(t,x)(\mu_i^{\rm i}-\mu_i^{\rm e}),\\
G_{\rm E}(t,x)&=
\begin{cases}
G_{\rm max}\cos^2(\pi x/2L_{\rm E})\sin(\pi t/t_{\rm E}) &\text{ if } 
0\leq t<t_{\rm E} \text{ and } 0\leq x<L_{\rm E},\\
0 & \text{ otherwise}.
\end{cases}
\end{split}
\end{equation}
We set $L_{\rm E}=0.1$cm, $t_{\rm E}=2$s and $G_{\rm max}F^2=0.5$mS$/$cm$^2$.
Thus a non-selective membrane conductance opens up for a 
brief period at the left edge of the domain.

In the numerical simulations to follow,
the number of spatial grid points is taken to be $N_x=500$
and $\Delta t=10$ms. 

A sample computation is shown in Figure \ref{csd_profile1}, 
where there is no gap junctional connectivity ($\chi=0$ in \eqref{chiDi}).
A wave of SD depolarization, accompanied by 
a large increase in K$^+$ concentration, is initiated near $x=0$
and propagates to the positive $x$ direction. 
We point out that our SD computation produces 
a negative shift in the extracellular voltage (known as the 
negative DC shift). This is, to the best of our knowledge, 
the first time this quantity has been 
computed in a biophysically consistent fashion
(there are some previous attempts in computing the negative DC 
shift in the literature \cite{almeida2004modeling,bennett2008quantitative}; 
the relationship between this 
and our present approach is discussed in Appendix \ref{appExtVol}) 
This is significant given the importance of the negative 
DC shift as an experimental signal in the detection of SD.
We computed the speed of the SD wave as follows.
At each grid point, we may compute the time at which the membrane 
potential reaches a threshold value of $-30$mV. We then use these 
values at grid points that fall in the interval $L/5<x<L/2$ to 
compute the speed of the wave. For the computations shown in 
Figure \ref{csd_profile1}, 
the wave speed is $5.56$cm/min, which is within the range 
of physiologically plausible values.

\begin{figure}
\begin{center}
\includegraphics[width=\textwidth]{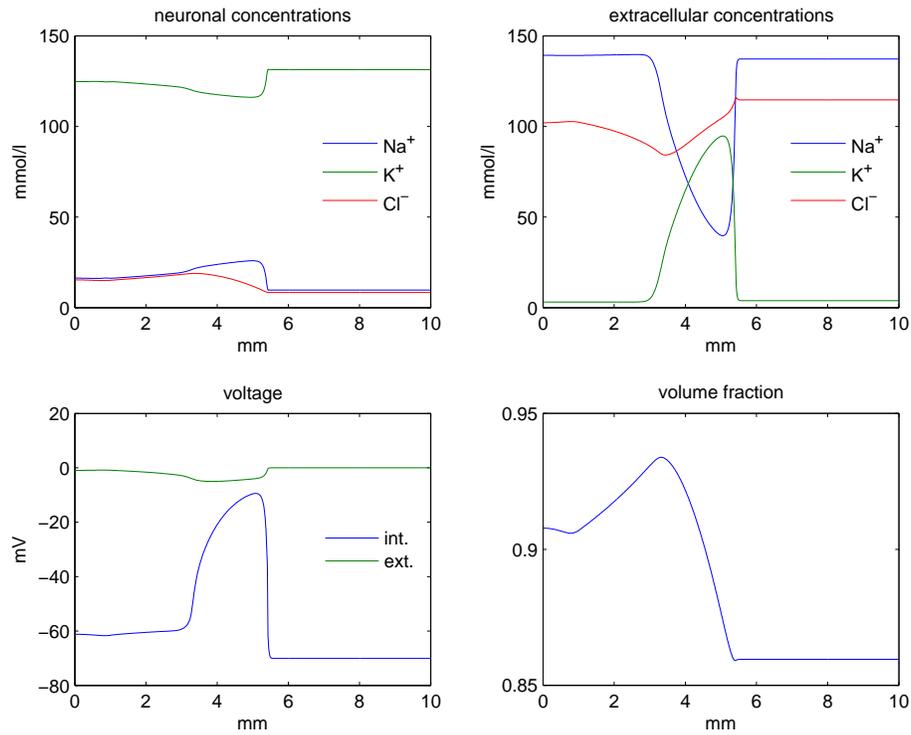}
\caption{A snapshot of an SD wave at $t=50$s.
Plotted are intracellular and extracellular ionic concentrations, 
intracellular(int.) and extracellular(ext.) voltages and the intracellular 
volume fraction. Note that the extracellular voltage experiences 
a negative shift (the negative DC shift).
\label{csd_profile1}}
\end{center}
\end{figure}

\subsection{Varying gap junctional conductance}

We study the dependence of the SD wave speed 
on the strength of gap junctional conductance. 
It has been suggested that gap junctional conductance 
may be necessary for the propagation of SD waves \cite{somjen2004ions}, 
and this was tested using a computational model in \cite{shapiro2001osmotic}.
Here, we reexamine this hypothesis.

We vary the value of $\chi$ in \eqref{chiDi}
from $0$ to $10^{-3}$ in increments of $5\times 10^{-5}$. 
Note that, in \cite{shapiro2001osmotic}, $\chi$ was given a value of $1/4$.
The resulting SD wave speed is given in Figure \ref{gap_speed_fig}.
We see that even a small increase in gap junctional conductance 
(far smaller than that postulated in \cite{shapiro2001osmotic}), 
leads to propagation 
speeds that exceed physiologically realistic bounds by large margins
(typical speeds are $2$ to $7$cm/min). 
The likely reason for the discrepancy between our computations
and those of \cite{shapiro2001osmotic} is that electrotonic coupling is not 
properly accounted for in \cite{shapiro2001osmotic}. Gap junctional coupling 
will inevitably lead to cable (or electrotonic) effects, 
which will enable fast wave propagation as seen in cardiac 
or skeletal muscle tissue. Constitutively open gap junctions, 
therefore, are likely not involved in the propagation of 
SD waves. For the gap junctional hypothesis to be viable, 
closed gap junctions may have to open with the spread of the 
wave \cite{somjen2004ions}.

\begin{figure}
\begin{center}
\includegraphics{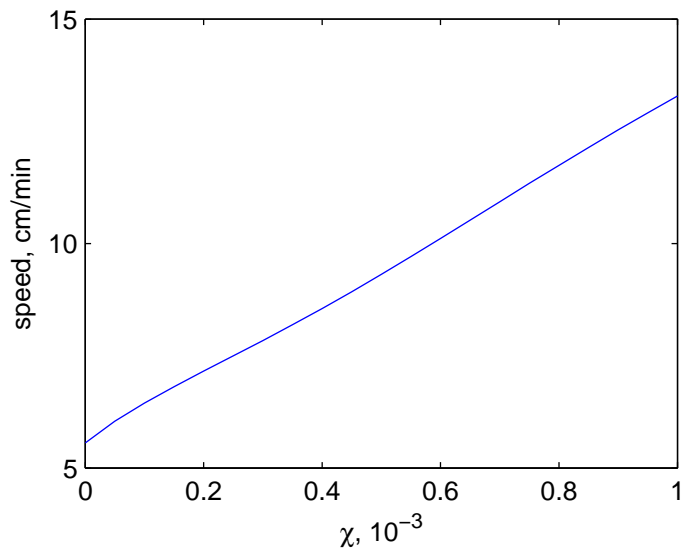}
\caption{Speed of spreading depression wave as a function of the 
parameter $\chi$ in \eqref{chiDi}.
\label{gap_speed_fig}}
\end{center}
\end{figure}

\subsection{Varying extracellular chloride concentration}

The value of the extracellular chloride concentration can be variable,
and its effect on SD is not well-understood. 
Here, we vary the preparatory initial value of extracellular
chloride concentration $c_{{\rm Cl}*}^{\rm e}$ between 
$6$mmol/l and $120$mmol/l and
perform computations at $31$
logarithmically equi-spaced values.

A sample plot of the propagating front when 
$c_{{\rm Cl}*}^{\rm e}=6$mmol/l is given in Figure \ref{csd_profile2}.
There are several interesting differences between this and the case 
$c_{{\rm Cl}*}^{\rm e}=120$mmol/l (shown in Figure \ref{csd_profile1}).
First, the spreading depression wave form is altered. 
The wave in the $c_{{\rm Cl}*}^{\rm e}=6$mmol/l case has longer wavelength,
and thus, a longer duration at each spatial location.
Another difference is that in the $c_{{\rm Cl}*}^{\rm e}=6$mmol/l case, 
the change in neuronal volume is small. 
Given (near) electroneutrality, osmotic pressure change is 
possible only when both anions and cations can pass the membrane.
With little chloride, inward Na$^+$ flux cannot be accompanied 
by a matching inward Cl$^-$ flux. This is in line with 
the verbal arguments in \cite{somjen2004ions}.

\begin{figure}
\begin{center}
\includegraphics[width=\textwidth]{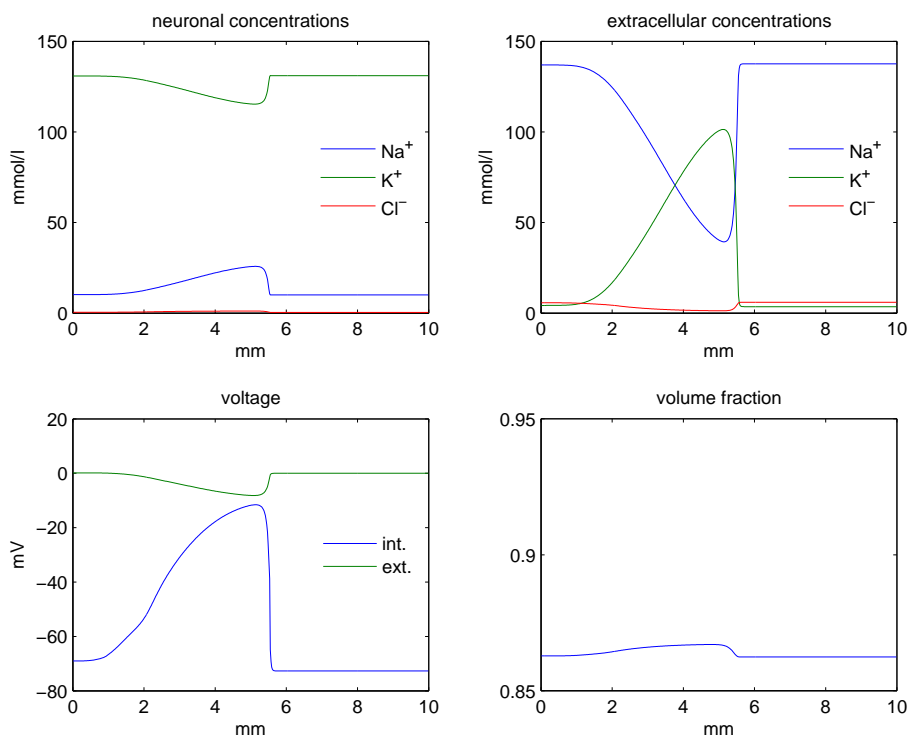}
\caption{A snapshot of an SD wave at $t=50$s when 
$c_{{\rm Cl}*}^{\rm e}=6$mmol/l. Compared to Figure \ref{csd_profile1}, 
the wave is wider and the volume change is minimal.
\label{csd_profile2}}
\end{center}
\end{figure}

In Figure \ref{Cl_speed_fig}, we plot the SD propagation 
speed as a function of $c_{{\rm Cl},0}^{\rm e}$. It is interesting 
that the dependence is non-monotonic. The reason why the speed 
increases at low $c_{{\rm Cl},0}^{\rm e}$ is likely because a high 
chloride concentration has a stabilizing effect on membrane excitability. 
The reason for the increase in speed at higher chloride concentration 
may be due to the fact that higher extracellular 
chloride concentration facilitates potassium diffusion.
In order for potassium to diffuse, by (near) electroneutrality, 
chloride must also diffuse, or a deficit in sodium concentration 
must be created. The speed of these processes should influence 
the ease with which potassium can diffuse, and thus, 
the speed of the SD wave.

\begin{figure}
\begin{center}
\includegraphics{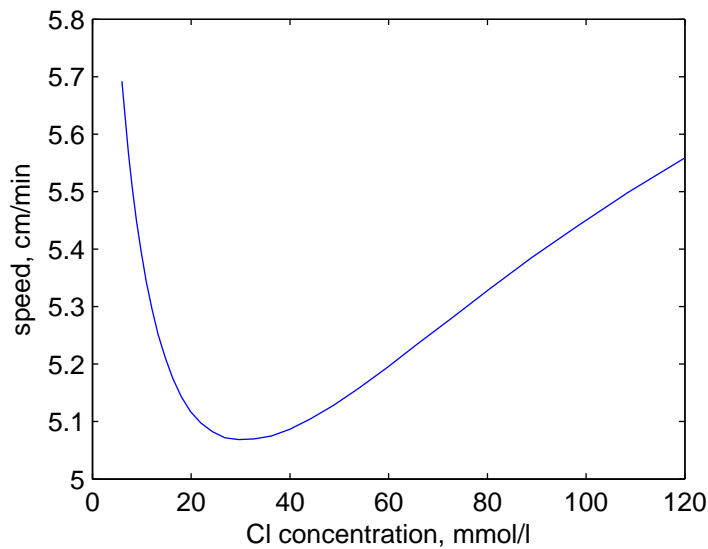}
\caption{Speed of spreading depression wave as a function of 
$c_{{\rm Cl}*}^{\rm e}$.
\label{Cl_speed_fig}}
\end{center}
\end{figure}

\section{Conclusion}

In this paper, we formulated a multidomain tissue model of ionic electrodiffusion, 
volume changes and osmotic water flow. We devised a numerical scheme for one spatial dimension 
without interstitial flow. This was applied to the study of SD.

An interesting theoretical issue is the relation of this tissue level 
model to more microscopic cellular level models such as \cite{mori_liu_eis}. 
The cardiac bidomain model can be derived as a formal homogenization limit of 
a microscopic model \cite{neu1993hst,KS}, 
and a similar derivation may be possible here.  

There is much to be done in terms of numerical algorithms. 
In the brain, it is increasingly recognized that water 
flow may play an important physiological role \cite{nedergaard2013garbage}, 
and it is thus of great interest 
to develop a numerical scheme that can treat water flow. 
The algorithm presented in Section \ref{sectnum} easily generalizes to 
two and three spatial dimensions, but the required computational cost may 
be substantial and much work may be needed for the development of efficient solvers.  
Another important direction would be to devise numerical methods that exploit 
the presence of disparate time scales, by updating certain variables at finer 
time steps than others.

An important feature of the model was that it satisfies an energy identity, 
and this property may be of direct interest in the study of SD. Indeed,
SD is understood as a major breakdown in ionic homeostasis, 
or dissipation of actively stored free energy \cite{dreier2013spreading}. 
Our model 
provides a means of quantitatively computing this breakdown.

The SD model used here is limited in several respects, the most 
important of which is the absence of a glial compartment,
which is known to play a significant role in ionic concentration 
homeostasis and hence in SD \cite{somjen2004ions}.

Finally, it should be stressed that the multidomain electrodiffusion model 
formulated here is not restricted in its application to SD or to brain 
ionic homeostasis. We hope it would find application in many physiological 
systems both neural and beyond.\newline

\noindent\textbf{Acknowledgments} I would like to thank 
Bob Eisenberg and Chun Liu for valuable discussion. 
Bob Eisenberg directed 
the author's attention to interstitial flow. 
Huaxiong Huang and Wei Yao kindly provided their code used 
in their publication \cite{yao2011continuum}.
I also thank 
the Fields Institute (Toronto, Canada) for the generous support during 
the Spreading Depression workshop in the summer of 2014. 
Many participants of the workshop gave me valuable input 
and much encouragement. This paper is dedicated to Robert Miura,
who introduced me to this topic almost ten years ago.  

\appendix

\section{Details of Spreading Depression Simulation}\label{appSD}

\subsection{Transmembrane Fluxes}\label{appSDmemflux}
We follow \cite{kager2000simulated,kager2002conditions,yao2011continuum}  for the transmembrane fluxes. We have:
\begin{equation}
\begin{split}\label{transmembranefluxes}
g_{\rm Na}&=j_{\rm NaL}+j_{\rm NaP}+j_{\rm NaE}+2h_{\rm NaK},\\
g_{\rm K}&=j_{\rm KL}+j_{\rm KDR}+j_{\rm KA}+j_{\rm KE}-3h_{\rm NaK},\\
g_{\rm Cl}&=j_{\rm ClL}+j_{\rm ClE}.
\end{split}
\end{equation}
The leak flux $j_{i{\rm L}}$ have the following form (see \eqref{HH}):
\begin{equation}
j_{i{\rm L}}=G_i(\mu_i^{\rm i}-\mu_i^{\rm e})
\end{equation}
where the conductances $G_i(z_iF)^2$ are given in Table \ref{leak_NaKATPase}.
\begin{table}
\begin{center}
\begin{tabular}{|c|c||c|c|}
\hline
ion & conductance (mS$/$cm$^2$)& \multicolumn{2}{|c|}{NaK ATPase parameters}\\
\hline
Na$^+$& $2\times 10^{-2}$&$I_{\rm max}$ & $13$ ($\mu$A/cm$^2$)\\
K$^+$& $7\times 10^{-2}$&$K_{\rm K}$ & $2$ (mmol/l)\\
Cl$^-$& $2\times 10^{-1}$&$K_{\rm Na}$ & $7.7$ (mmol/l)\\
\hline 
\end{tabular}
\caption{Leak conductances and NaKATPase parameters\label{leak_NaKATPase}}
\end{center}
\end{table}
The persistent Na$^+$ flux $j_{\rm NaP}$ has the following form
(see \eqref{GHK})
\begin{equation}
j_{\rm NaP}=m_{\rm NaP}^2h_{\rm NaP}P_{\rm NaP}
J_{\rm GHK}(1,c_{\rm Na}^{\rm i},c_{\rm Na}^{\rm e},\phi_{\rm m})
\end{equation}
where $P_{\rm NaP}$ is the permeability and 
$s=m_{{\rm NaP}}, h_{{\rm NaP}}$ are the gating variables.
The gating variables satisfy the equations:
\begin{equation}
\PD{s}{t}=\alpha_s(\phi_m)(1-s)-\beta_s(\phi_m)s.\label{gatealphabeta}
\end{equation}
The form of $j_{\rm KA}$ and $j_{\rm KDR}$ are similar.
The parameters and functions defining the above equations are
given in Table \ref{gated_fluxes}.

\begin{table}
\begin{center}
\begin{tabular}{|c|c|c|rcl|}
\hline
flux & $P$ (cm/s) & gates & \multicolumn{3}{|c|}{rate functions (ms$^{-1}$)}\\
\hline
\multirow{4}{*}{$j_{\rm NaP}$} & \multirow{4}{*}{$2\times 10^{-5}$} 
& \multirow{4}{*}{$m^2h$} 
& $\alpha_m$&$=$&$(6(1+\exp(-(0.143\phi_m+5.67))))^{-1}$ \\
& & & $\beta_m$&$=$&$1-\alpha_m$\\
& & & $\alpha_h$&$=$&$5.12\times 10^{-6}\exp(-(0.056\phi_m+2.94))$\\
& & & $\beta_h$&$=$&$1.6\times10^{-4}(1+\exp(-(0.2\phi_m+8)))^{-1}$\\
\hline
\multirow{2}{*}{$j_{\rm KDR}$} & \multirow{2}{*}{$1\times 10^{-3}$} 
& \multirow{2}{*}{$m^2$} & $\alpha_m$&$=$&$0.08\varphi(0.2\phi_{\rm m}+6.98)$ \\
& & & $\beta_m$&$=$&$0.25\exp(-(0.25\phi_{\rm m}+1.25))$\\
\hline
\multirow{4}{*}{$j_{\rm KA}$} & \multirow{4}{*}{$1\times 10^{-4}$} 
& \multirow{4}{*}{$m^2h$} & $\alpha_m$&$=$&$0.2\varphi(0.1\phi_{\rm m}+5.69)$ \\
& & & $\beta_m$&$=$&$0.175\wh{\varphi}(0.1\phi_{\rm m}+2.99)$\\
& & & $\alpha_h$&$=$&$0.016\exp(-(0.056\phi_{\rm m}+4.61))$\\
& & & $\beta_h$&$=$&$0.5(1+\exp(-(0.2\phi_{\rm m}+11.98)))^{-1}$\\
\hline
\end{tabular}
\caption{Ion fluxes and their corresponding parameters and rate functions.
In the above, $P$ is the permeability, 
$\varphi(u)=u/(1-\exp(-u)), \wh{\varphi}(u)=u/(\exp(u)-1)$ and 
the membrane potential $\phi_{\rm m}$ is in mV.\label{gated_fluxes}}
\end{center}
\end{table}

The excitation currents $j_{i{\rm E}}$ are used to initiate the spreading
depression wave. This is described in \eqref{excitation} of Section 
\ref{sectsimresults}. 

The Na$^+$ and K$^+$ flux carried by the NaK ATPase is given by $3h_{\rm NaK}$
and $-2h_{\rm NaK}$ respectively in \eqref{transmembranefluxes}.
Here, $h_{\rm NaK}$ is given by
\begin{equation}
h_{\rm NaK}=\wh{I}_{\rm max}(1+K_{\rm K}/c_{\rm K}^{\rm e})^{-2}(1+K_{\rm Na}/c_{\rm Na}^{\rm i})^{-3}
\end{equation}
where the constants $I_{\rm max}=\wh{I}_{\rm max}F, K_{\rm K}$ and $K_{\rm Na}$ 
are given in Table \ref{leak_NaKATPase}.

\subsection{Initial Conditions}\label{appSDinit}

We first set preparatory initial data and run the model 
to steady state. These steady state values are then used 
as initial data to run the model simulations (with $0$ excitatory fluxes).

The list of preparatory initial data for the concentrations $c_i^k$
and membrane potential $\phi_{\rm m}$, and volume fraction $\alpha_k$
are given in Table \ref{conc_al_ivals}. 
The preparatory initial value for intracellular chloride is given by 
the expression
\begin{equation}\label{pinitclint}
c_{{\rm Cl}*}^{\rm i}=c_{{\rm Cl}*}^{\rm e}\exp(\phi_{{\rm m}*}F/RT)
\end{equation}
where the subscript $*$ refers to the preparatory initial values.
Once these preparatory initial value 
are given, we may compute the impermeable solute amount $a_k$ by 
solving \eqref{EN} for $a_k$:
\begin{equation}\label{imp_solutes}
a_k=-\frac{1}{z_0^k}\sum_{i=1}^M z_i\alpha_k^*c_{i*}^k.
\end{equation}

\begin{table}
\begin{center}
\begin{tabular}{|c|c|c|c|}
\hline
$\alpha_{\rm i}$ & $1/1.15$ & $\phi_{\rm m}$ & $-70$(mV)\\
\hline
$c_{\rm Na}^{\rm i}$ & $10$ & $c_{\rm Na}^{\rm e}$ & $145$ \\
$c_{\rm K}^{\rm i}$ & $130$ & $c_{\rm K}^{\rm e}$ & $3.5$\\
$c_{\rm Cl}^{\rm i}$ & $-$ & $c_{\rm Cl}^{\rm e}$ & $120$\\
\hline
\end{tabular}
\caption{Preparatory initial values. Concentrations are 
in mmol/l. For intracellular chloride concentration, 
see \eqref{pinitclint}. Note that $\alpha_{\rm e}$ is 
set to $1-\alpha_{\rm i}$ (see \eqref{incomp}).\label{conc_al_ivals}}
\end{center}
\end{table}

The preparatory initial values of the gating variables are set 
to the steady state values of \eqref{gatealphabeta}:
\begin{equation}
s=\frac{\alpha_s(\phi_{{\rm m}*})}{\alpha_s(\phi_{{\rm m}*})+\beta_s(\phi_{{\rm m}*})}.
\end{equation}

Given these preparatory initial conditions, the model is run 
to steady state with no excitatory fluxes ($j_{i{\rm E}}=0$ in 
\eqref{transmembranefluxes}) and $\Delta t=10$s. 
The preparatory run is terminated 
when the discrete time derivative of the ionic concentrations falls 
below $10^{-12}$ times the maximum ionic concentration. We note that 
the difference between the preparatory initial values and the steady 
state values are typically very small. 

\section{Computation of Extracellular Voltage}\label{appExtVol}

In our model, the extracellular voltage is computed as a natural 
output of the system of equations, and we cannot, in general, compute the 
membrane potential without computing both the extracellular 
and intracellular voltages (and the other compartmental voltages
if there are more than two compartments). 
There is, however, a special situation 
in which the membrane potential can be computed without computing 
the extracellular voltage. We discuss this special case, as it 
relates to previous attempts in obtaining the extracellular 
voltage \cite{almeida2004modeling,bennett2008quantitative}.
Let us restrict our attention to the two compartment case 
without fluid flow in one spatial dimension. We let the equations be satisfied 
on the interval $0<x<L$.
We adopt the notation 
of Section \ref{sectSDmodel}. Let us assume furthermore 
that gap junctional coupling is absent ($D_i^{\rm i}=0$).
Taking the time derivative of the first equality in \eqref{chargecapSD}
and using \eqref{ceSD}, we have:
\begin{equation}\label{pointmodelreduction}
\gamma C_{\rm m}\PD{\phi_{\rm m}}{t}=\sum_{i=1}^M \gamma g_i,
\end{equation}
where we used our assumption $D_i^{\rm i}=0$.
The above equation does not explicitly depend on $\phi_{\rm i}$ or $\phi_{\rm e}$, 
and only on the membrane potential $\phi_{\rm m}$, since the transmembrane 
fluxes $g_i$ depend on voltage only through $\phi_{\rm m}$.
Now, let us use the electroneutrality relation in place of the charge capacitor 
relation \eqref{chargecapSD}:
\begin{equation}\label{ENSD}
0=z_0^{\rm i}Fa_{\rm i}+\sum_{i=1}^3 z_iF\alpha_{\rm i}c_i^{\rm i}
=-\paren{z_0^{\rm e}Fa_{\rm e}+\sum_{i=1}^3 z_iF\alpha_{\rm e}c_i^{\rm e}}, \; 
\phi_{\rm m}=\phi_{\rm i}-\phi_{\rm e}.
\end{equation}
Then \eqref{pointmodelreduction} reduces further to:
\begin{equation}\label{ENediffform}
\sum_{i=1}^M \gamma g_i=0.
\end{equation}
Equations \eqref{pointmodelreduction} and \eqref{ENediffform} are often used 
in modeling studies to obtain the membrane potential. Note, however, that 
this is valid only when there is no gap junctional coupling.

Let us now take the derivative of the second equality in \eqref{ENSD} with respect to $t$.
Using \eqref{ciSD} and \eqref{ENediffform}, we have
\begin{equation}
\PD{}{x}\paren{a+\sigma\PD{\phi_{\rm e}}{x}}=0, \; a=\sum_{i=1}^M z_iFD_i^{\rm e}\PD{c_i^{\rm e}}{x}, \; 
\sigma=\sum_{i=1}^M \frac{(z_iF)^2D_i^{\rm e}c_i^{\rm e}}{RT}.
\end{equation}
This is the same as \eqref{cableeqn} except that the capacitor term and the advective current terms
are absent. Assuming no-flux boundary conditions at $x=0$ and $x=L$, we obtain, from the above:
\begin{equation}
a+\sigma\PD{\phi_{\rm e}}{x}=0.
\end{equation}
This is the relation used to determine the extracellular voltage
in \cite{almeida2004modeling,bennett2008quantitative}.
It should be emphasized, however, that one may use the 
above expression to compute
the extracellular voltage only under the restrictive 
conditions of no gap junctional coupling, 
one-dimensional geometry and no-flux boundary 
conditions. Otherwise, the charge capacitor relation (or 
equivalently, near electroneutrality) will be violated.

\bibliographystyle{plain}
\bibliography{mylib}
\end{document}